\newcommand{\algmargin}{\the\ALG@thistlm}
\newcommand{\norm}[1]{\left\lVert#1\right\rVert}
\newtheorem{theorem}{Theorem}
\newtheorem{lemma}{Lemma}
\newtheorem{definition}{Definition}
\renewcommand\@biblabel[1]{#1.}
\definecolor{comment}{rgb}{0, 0, 0}
\newcommand{\PKU}{Center on Frontiers of Computing Studies, School of Computer Science, Peking University, Beijing 100871, China}
\begin{document}

\title{Heisenberg limited quantum algorithm for estimating the fidelity susceptibility}

\author{Yukun Zhang}
\affiliation{\PKU}

\author{Xiao Yuan}
\email{xiaoyuan@pku.edu.cn}
\affiliation{\PKU}

\date{\today}% It is always \today, today,
             %  but any date may be explicitly specified

\begin{abstract}
The fidelity susceptibility serves as a universal probe for quantum phase transitions, offering an order-parameter-free metric that captures ground-state sensitivity to Hamiltonian perturbations and exhibits critical scaling. Classical computation of this quantity, however, is limited by exponential Hilbert space growth and correlation divergence near criticality, restricting analyses to small or specialized systems. Here, we present a quantum algorithm that achieves efficient and Heisenberg-limited estimation of fidelity susceptibility through a novel resolvent reformulation, leveraging quantum singular value transformation for pseudoinverse block encoding with amplitude estimation for norm evaluation. This constitutes the first quantum algorithm for fidelity susceptibility with optimal precision scaling. Moreover, for frustration-free Hamiltonians, we show that the resolvent can be approximated with a further quadratic speedup. Our work bridges quantum many-body physics and algorithmic design, enabling scalable exploration of quantum criticality with applications in materials simulation, metrology, and beyond on fault-tolerant quantum platforms.
\end{abstract}

\maketitle

% \section{Introduction}
Quantum phase transitions~\cite{sachdev1999quantum} represent a cornerstone of quantum many-body physics, characterizing abrupt changes in the ground-state properties of quantum systems driven by external parameters.  Understanding and detecting these transitions is crucial for characterizing quantum materials, designing quantum devices, and exploring exotic phases of matter. Unlike thermal phase transitions, quantum phase transitions occur at zero temperature and are governed by quantum fluctuations~\cite{sondhi1997continuous}, manifesting in diverse phenomena from superconductivity~\cite{ginzburg2009theory} to topological order~\cite{wen1989vacuum}. A key challenge in studying quantum phase transitions lies in identifying suitable order parameters, which often requires prior knowledge of the system's symmetry-breaking patterns. 

The fidelity susceptibility~\cite{zanardi2006ground,you2007fidelity,gu2010fidelity} has emerged as a versatile, order-parameter-free diagnostic tool for identifying quantum critical points, providing a geometric measure of how rapidly the ground state changes under parameter variations in the system Hamiltonian.
The fidelity susceptibility $\chi_F(\lambda)$ quantifies the rate of change of the ground state $|\Psi_0(\lambda)\rangle$ concerning a control parameter $\lambda$. At quantum critical points, $\chi_F$ exhibits singular behavior, scaling with system size according to universal exponents that characterize the universality class of the transition~\cite{gu2008fidelity,gu2009scaling}. This makes it particularly valuable for finite-size scaling analysis and for distinguishing between different types of phase transitions~\cite{gu2009scaling}. Moreover, the fidelity susceptibility is directly related to quantum Fisher information~\cite{liu2020quantum}, establishing its fundamental role in quantum metrology~\cite{giovannetti2006quantum}.

Despite its theoretical importance, computing the fidelity susceptibility remains a formidable computational challenge. The quantity requires knowledge of the ground state and its response to perturbations, necessitating either full diagonalization of the Hamiltonian or sophisticated many-body techniques. For quantum many-body systems, where the Hilbert space dimension grows exponentially with system size, classical computation quickly becomes intractable~\cite{gu2010fidelity,wang2015fidelity}. This computational barrier has limited the practical application of fidelity susceptibility to small systems or special cases where analytical solutions exist.

The advent of quantum computing offers new possibilities for overcoming these limitations. Quantum algorithms can naturally encode and manipulate quantum states~\cite{nielsen2010quantum}, potentially providing exponential advantages for certain many-body problems. However, designing efficient quantum algorithms for fidelity susceptibility is non-trivial, as it requires information on the full spectrum of the Hamiltonian. While methods based on variational methods~\cite{tan2021variational,di2022quantum} have been proposed, there is no guarantee of performance for the algorithms to work effectively in practical scenarios.

Here, we present quantum algorithms for estimating fidelity susceptibility based on a newly proposed resolvent formulation. It thus relates the estimation of the fidelity susceptibility to the quantum linear system solving problems (QLSP), which is proposed and solved by the seminal work of Ref.~\cite{harrow2009quantum}. Unlike traditional expansion methods that require explicit computation of excited states, our approach reformulates fidelity susceptibility $\chi_F$ as the squared norm of a matrix-vector product involving the Moore-Penrose pseudoinverse of the (shifted) Hamiltonian. This resolvent-based expression enables us to leverage recent advances in quantum singular value transformation (QSVT)~\cite{gilyen2019quantum,martyn2021grand} to construct an efficient block encoding of the relevant operators. Combined with quantum amplitude estimation (QAE)~\cite{brassard2000quantum}, our algorithm achieves Heisenberg-limited scaling~\cite{giovannetti2006quantum,braunstein1994statistical} in the estimation error, requiring $\mathcal{\widetilde{O}}(1/\epsilon)$ queries to estimate $\chi_F$ to additive error $\epsilon$, where $\mathcal{\widetilde{O}}$ notation omits poly-logarithmic scaling. We also uncover further improvements in complexity that could be made for frustration-free systems~\cite{affleck1988valence,sattath2016local}. Furthermore, we discuss applications in the estimation of linear static susceptibility~\cite{kubo1957statistical,fetter2012quantum} and quantum Fisher information~\cite{braunstein1994statistical}.

This work presents the first quantum algorithm for estimating fidelity susceptibility with guaranteed efficiency, resolving a longstanding challenge in its computation and broadening the scope of quantum computing applications.
Beyond this result, it opens pathways to quantum-enhanced studies of critical phenomena, with far-reaching implications for materials design, quantum simulation, and quantum metrology on emerging quantum computing platforms.

\vspace{0.15cm}

\emph{Fidelity Susceptibility}---The fidelity susceptibility quantifies the sensitivity of the ground state to perturbations in a control parameter, serving as a powerful tool for identifying quantum phase transitions without requiring prior knowledge of order parameters~\cite{zanardi2006ground,you2007fidelity,gu2010fidelity}. Consider a Hamiltonian $H(\lambda + \epsilon) = H(\lambda) + \epsilon H_I$, where $H_I$ is the driving term and $\epsilon$ is a small perturbation. The ground-state fidelity is defined as $F(\lambda, \lambda + \epsilon) = |\langle \Psi_0(\lambda) | \Psi_0(\lambda + \epsilon) \rangle|$, where $|\Psi_0(\lambda)\rangle$ denotes the ground state of $H(\lambda)$. The fidelity susceptibility is given by
\begin{equation}
\chi_F(\lambda) = -\left. \frac{\partial^2 \ln F}{\partial \epsilon^2} \right|_{\epsilon=0},
\end{equation}
which measures the second-order response of the fidelity and exhibits peaks or divergences at critical points, following universal scaling laws that characterize the transition's universality class~\cite{gu2008fidelity,gu2009scaling}. Notably, $\chi_F$ is closely related to the quantum Fisher information~\cite{braunstein1994statistical}, establishing its role in quantum metrology where it bounds the precision of parameter estimation. In finite-size systems, $\chi_F$ enables extrapolation of critical points via finite-size scaling, making it invaluable for numerical studies of phase diagrams~\cite{gu2009scaling}.

There are several equivalent formulations of the fidelity susceptibility, each highlighting different aspects and computational approaches. One geometric form expresses it in terms of the ground-state derivative~\cite{campos2007quantum}:
\begin{equation}
\chi_F(\lambda) = \frac{\langle \partial_\lambda \Psi_0 | \partial_\lambda \Psi_0 \rangle}{\langle \Psi_0 | \Psi_0 \rangle} - \frac{|\langle \Psi_0 | \partial_\lambda \Psi_0 \rangle|^2}{\langle \Psi_0 | \Psi_0 \rangle^2},
\end{equation}
which corresponds to the real part of the quantum geometric tensor and connects to Berry curvature in multi-parameter spaces. This formulation is derived by expanding the fidelity $F(\lambda, \lambda + \epsilon) = 1 - \frac{1}{2} \chi_F \epsilon^2 + O(\epsilon^3)$ and expressing the perturbed state as $|\Psi_0(\lambda + \epsilon)\rangle \approx |\Psi_0(\lambda)\rangle + \epsilon |\partial_\lambda \Psi_0\rangle$, without assuming normalization of the derivative term. A perturbative expansion yields~\cite{gu2010fidelity}
\begin{equation}\label{eq:perturb}
\chi_F(\lambda) = \sum_{j \neq 0} \frac{|\langle \Psi_j(\lambda) | H_I | \Psi_0(\lambda) \rangle|^2}{[E_j(\lambda) - E_0(\lambda)]^2},
\end{equation}
which involves the full eigenspectrum and revealing contributions from excited states, particularly dominant near gap-closing transitions. This is formulated using time-independent perturbation theory, where the first-order correction to the ground state involves summing over matrix elements divided by energy differences, leading to the quadratic denominator after fidelity expansion. Finally, a linear-response formulation~\cite{you2007fidelity} recasts it as
\begin{equation}
\chi_F(\lambda) = \int_0^\infty d\tau \frac{\langle \Psi_0 | H_I(\tau) H_I | \Psi_0 \rangle - \langle \Psi_0 | H_I | \Psi_0 \rangle^2}{\tau},
\end{equation}
where $H_I(\tau) = e^{H \tau} H_I e^{-H \tau}$, amenable to time-evolution simulations. This representation is derived by interpreting the perturbative sum as the zero-frequency component of a spectral function and applying a Fourier transform. These representations underscore the multifaceted nature of $\chi_F$, linking static ground-state properties to dynamic responses.

Classically, fidelity susceptibility $\chi_F$ has been computed using exact diagonalization for small systems~\cite{gu2010fidelity}, density-matrix renormalization group methods for one-dimensional and quasi-1D systems~\cite{tzeng2008scaling}, and quantum Monte Carlo techniques~\cite{wang2015fidelity}, which can efficiently sample the linear-response form in sign-problem-free models. However, these classical methods apply only to special cases, and because all definitions of $\chi_F$ involve nontrivial features of the Hamiltonian, computing $\chi_F$ in the general setting is classically inefficient.

\vspace{0.15cm}

\emph{Main Results---}We now show that quantum computers can bypass these classical limitations and efficiently estimate the fidelity susceptibility. 

Our analysis is based on the following assumptions:
(1) the Hamiltonian $H(\lambda)$ has a nondegenerate ground state $|\Psi_0(\lambda)\rangle$;
(2) a lower bound $\Delta$ on the spectral gap is known, such that $\Delta \leq E_1(\lambda) - E_0(\lambda)$;
(3) we have access to a unitary (and its inverse) that prepares the ground state, $U|0^n\rangle = |\Psi_0\rangle$, along with the corresponding ground-state energy $E_0$.
The first two assumptions are standard in studies of quantum phase transitions and ground-state preparation~\cite{lin2020near}. Regarding the third assumption, since $H(\lambda)$ is typically a simple Hamiltonian, it is reasonable to assume that the ground state can be prepared via a simple, even classical unitary process. Otherwise, one may rely on well-established quantum algorithms such as adiabatic state preparation~\cite{albash2018adiabatic}, or quantum phase estimation~\cite{kitaev1995quantum}. The corresponding ground-state energy $E_0$ can then be obtained either through quantum phase estimation~\cite{kitaev1995quantum} or by directly measuring the expectation value of the Hamiltonian. We note that extensions to degenerate cases~\cite{su2013quantum}, relevant to symmetry-broken or topological phases~\cite{wen1989vacuum}, would require explicit characterization of the ground-state subspace, which we leave for future work. For notational simplicity, we omit the explicit $\lambda$ dependence in what follows.

The cornerstone of our approach is a novel resolvent-based reformulation of the fidelity susceptibility, expressed as
\begin{equation}\label{eq:key_formula}
\begin{aligned}
    \chi_F &= \langle \Psi_0 | H_I P^\perp (H - E_0)^{-2} P^\perp H_I | \Psi_0 \rangle \\
    &= \| P^\perp (H - E_0)^{-1} P^\perp H_I | \Psi_0 \rangle \|_2^2 \\
    &= \| (H - E_0)^+ H_I | \Psi_0 \rangle \|_2^2 =: \| G | \Psi_0 \rangle \|_2^2,
\end{aligned}
\end{equation}
where $P^\perp = I - P$, $P = | \Psi_0 \rangle \langle \Psi_0 |$, $\| \cdot \|_2$ denotes the $\mathrm{L}_2$ norm, and $A^+$ is the Moore-Penrose pseudoinverse of $A$.
The formula is equivalent to the perturbative expansion given by Eq.~\eqref{eq:perturb}.  Our expression is verified by spectral decomposition of the resolvent $(H - E_0)^{-1}$ in the Hamiltonian's eigenbasis, where the pseudoinverse naturally excludes the zero-eigenvalue ground state while inverting the excited-state contributions. The projector $P^\perp$ ensures orthogonality to the ground state, validating the inversion; equivalently, the pseudoinverse sets the zero eigenvalue contribution to zero while reciprocating others.

This resolvent formulation offers several critical advantages over traditional expressions. Unlike the perturbative sum, which requires explicit computation or summation over potentially exponentially many excited states, the resolvent approach encapsulates all contributions implicitly through matrix inversion—a task amenable to quantum linear algebra techniques~\cite{harrow2009quantum,gilyen2019quantum}. This shift avoids the need for full diagonalization, which is intractable classically for large systems, and enables efficient approximation via polynomial series. Furthermore, it establishes a direct connection to QLSPs~\cite{harrow2009quantum}, but rather focuses on realizing the matrix inversion rather than the preparation of states. The implications are twofold: computationally, it allows for polynomial-time quantum estimation in parameters like the spectral gap $\Delta$; conceptually, it bridges fidelity susceptibility to resolvent methods in operator theory, potentially inspiring new analytical bounds on critical scaling. To our knowledge, this is the first application of resolvent techniques to fidelity susceptibility computation, enabling efficient quantum implementation by reducing the problem to matrix inversion and norm estimation, bypassing explicit excited-state summations that hinder classical and prior quantum approaches. Crucially, the matrix inversion problem is shown to be a BQP-complete problem~\cite{harrow2009quantum}. This indicates that the estimation of fidelity susceptibility as a quantum-physics-related problem could exhibit a quantum advantage over classical computers.

Under the above stated assumptions, we obtain the following main result, marking the first Heisenberg-limited quantum algorithm for this task and demonstrating a clear quantum advantage in precision and scalability.

\begin{theorem}[Informal]\label{theorem:main}
Assume access to block encodings $U_H$ and $U_I$ of $H - E_0$ and $H_I$ with normalization factors $\alpha_H$ and $\alpha_I$, respectively, and access to the ground state $|\Psi_0\rangle$. Then, there exists a quantum algorithm that estimates $\chi_F$ to additive error $\epsilon$ with high probability using $\mathcal{O}\left(\frac{\alpha_H \alpha_I^2}{\Delta^3 \epsilon} \log \left(\frac{\alpha_I}{\Delta \epsilon}\right)\right)$ queries to $U_H$ and its inverse, and $\mathcal{O}\left(\frac{\alpha_I^2}{\Delta^2 \epsilon}\right)$ queries to $U_I$ and its inverse.
\end{theorem}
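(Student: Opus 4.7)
The plan is to realize the operator $G=(H-E_0)^{+}H_I$ as a block-encoded quantum subroutine, act with it on $|\Psi_0\rangle$, and read off $\|G|\Psi_0\rangle\|_2^{2}=\chi_F$ by amplitude estimation. The construction factors into three conceptual steps whose query costs multiply to give the claimed bounds.

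\textbf{Step 1: prepare $P^{\perp}H_I|\Psi_0\rangle$ inside a block encoding.} Apply $U_I$ to $|0^{a_I}\rangle|\Psi_0\rangle$; the $|0^{a_I}\rangle$-indicated amplitude is $H_I|\Psi_0\rangle/\alpha_I$. Then enforce the projector $P^{\perp}=I-U|0^{n}\rangle\langle 0^{n}|U^{\dagger}$ by a two-term LCU built from the ground-state preparation oracle $U$ and its inverse. This step is indispensable because the polynomial approximation of $1/x$ used next is only controlled on eigenvalues bounded below by $\Delta$; the ground-state component of $H_I|\Psi_0\rangle$ must be cleanly annihilated in advance.

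\textbf{Step 2: block-encode the pseudoinverse via QSVT.} The nonzero eigenvalues of $(H-E_0)/\alpha_H$ lie in $[\Delta/\alpha_H,1]$. I invoke the standard QSVT odd polynomial $p$ approximating $x\mapsto \Delta/(2\alpha_H x)$ uniformly to error $\epsilon'$ on this interval with $|p|\le 1/2$, of degree $d=\mathcal{O}\!\bigl((\alpha_H/\Delta)\log(\alpha_H/(\Delta\epsilon'))\bigr)$. Inserting $p$ into the QSVT circuit around $U_H,U_H^{\dagger}$ produces a block encoding $\widetilde{G}\approx(\Delta/2)(H-E_0)^{+}$ on the excited subspace at a cost of $d$ queries to $U_H,U_H^{\dagger}$ per call. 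Composing Steps~1 and~2 gives a unitary $V$ whose all-zero-ancilla amplitude on the input $|\Psi_0\rangle$ equals $(\Delta/(2\alpha_I))(H-E_0)^{+}H_I|\Psi_0\rangle$ up to an operator error of order $\epsilon'$.

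\textbf{Step 3: amplitude estimation.} The all-zero-ancilla flag probability of $V$ is $p=\bigl(\Delta/(2\alpha_I)\bigr)^{2}\chi_F$ up to $\mathcal{O}(\epsilon')$ corrections. Quantum amplitude estimation returns $p$ to additive accuracy $\eta$ using $\mathcal{O}(1/\eta)$ invocations of $V$. Choosing $\eta=\Theta(\epsilon\Delta^{2}/\alpha_I^{2})$ and $\epsilon'=\Theta(\eta)$ yields an $\epsilon$-accurate estimate of $\chi_F$ with high probability. Multiplying the per-invocation costs, the total resources are $\mathcal{O}\!\bigl(\alpha_H\alpha_I^{2}/(\Delta^{3}\epsilon)\cdot\log(\alpha_I/(\Delta\epsilon))\bigr)$ queries to $U_H,U_H^{\dagger}$ and $\mathcal{O}(\alpha_I^{2}/(\Delta^{2}\epsilon))$ queries to $U_I,U_I^{\dagger}$, exactly as stated.

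\textbf{Main obstacle.} The delicate part is end-to-end error accounting. The polynomial bound $\|p((H-E_0)/\alpha_H)-(\Delta/2)(H-E_0)^{+}\|\le\epsilon'$ holds only on the excited subspace, so any residual ground-state component left by an imperfect $P^{\perp}$ could be amplified by the unconstrained behavior of $p$ near $x=0$. The proof must therefore track simultaneously (i) how sharply the LCU implementation of $P^{\perp}$ suppresses the $|\Psi_0\rangle$ component, (ii) the $\epsilon'$ operator error after being amplified by $\|H_I|\Psi_0\rangle\|\le\alpha_I$, and (iii) how these errors propagate through the squared-norm readout and through amplitude estimation; balancing the resulting trade-off is precisely what produces the $\log(\alpha_I/(\Delta\epsilon))$ factor in the $U_H$ query count.
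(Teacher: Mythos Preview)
Your approach is correct and recovers the stated complexity, but it differs from the paper's in one key simplification. You insert an explicit projection $P^{\perp}$ (built via LCU from the ground-state preparation oracle) before invoking the QSVT inverse, and you flag as the main obstacle the possibility that an imperfect projection leaves ground-state weight that the polynomial, being uncontrolled near $x=0$, might amplify. The paper bypasses this concern entirely: the QSVT polynomial approximating $1/x$ is \emph{odd}, hence $p(0)=0$ exactly. Because $(H-E_0)|\Psi_0\rangle=0$, applying $p\bigl((H-E_0)/\alpha_H\bigr)$ already annihilates the ground-state component, so the transform realizes the Moore--Penrose pseudoinverse $(H-E_0)^{+}$ directly---no projector is needed. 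The end-to-end error analysis then reduces to two clean contributions (the operator error $\|Q-G\|\le\epsilon_1$ from polynomial truncation, and the amplitude-estimation error $\epsilon_2$), balanced exactly as you outline in Step~3. Your route works, but the paper's is tidier: the oddness of the approximating polynomial does the projection for free, so the obstacle you highlight never arises.
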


The Heisenberg scaling in $\epsilon$ arises from the quadratic precision boost of quantum amplitude estimation, while the $\Delta^{-3}$ dependence reflects the condition number of the inversion amplified by the norm estimation. 
Here we briefly summarize the quantum algorithm as follows and refer to Supplementary Materials for details.

Step 1. 
Construct a block encoding of the pseudoinverse $(H - E_0)^+$ by approximating the reciprocal function $1/x$ via QSVT, which applies a polynomial approximation to the singular (eigen) values of the normalized Hamiltonian. This step leverages Chebyshev polynomials for near-optimal approximation, with degree scaling as $\mathcal{O}(\alpha_H / \Delta \log(1/\epsilon'))$ for internal error $\epsilon'$.
    
Step 2. Form the block encoding of $G = (H - E_0)^+ H_I$ by taking the product of the block encoded matrix $(H - E_0)^+$ and $H_I$.
    
Step 3. Estimate $\chi_F = \|G |\Psi_0\rangle \|_2^2$ using QAE methods applied to the action of the block-encoded $G$ on $|\Psi_0\rangle$, which provides a quadratic speedup in precision over naive sampling.

We remark that as long as the lower bound on the spectral gap remains non-vanishingly (polynomially) small, our algorithm remains efficient. Our framework not only achieves optimal Heisenberg scaling in $\epsilon$ but also optimizes dependence on $\Delta$ and operator norms, representing a significant advance over classical methods limited by exponential scaling or perturbative approximations.

\vspace{0.15cm}

For certain special classes of Hamiltonians, we can achieve a quadratic improvement in the dependence on the spectral gap $\Delta$ in the matrix inversion step. In particular, consider frustration-free (FF) Hamiltonians~\cite{affleck1988valence,sattath2016local}, defined as $H_F = \sum_{j=0}^{r-1} \Pi_j$ where each $\Pi_j$ is a projector and the projectors share a common ground space (i.e., $\cap_j \ker(\Pi_j) \neq \emptyset$), with ground energy $E_0 = 0$. Such Hamiltonians arise in various quantum many-body models, including quantum satisfiability problems~\cite{bravyi2006efficient}, certain topological phases~\cite{kitaev1995quantum}, and exhibit unique properties like Parent Hamiltonians for PEPS~\cite{perez2007peps} or exact solvability in some cases~\cite{affleck1988valence}. The key property enabling speedup is that FF Hamiltonians admit a structured block encoding where the ground state maps to eigenvalue one~\cite{somma2013spectral,king2025quantum}, allowing polynomial approximations of the inverse to exploit rapid oscillations near the spectral edge via Chebyshev polynomial of the first kind~\cite{schaeffer1941inequalities,trefethen2019approximation}, reducing the degree from $\mathcal{O}(\alpha_H / \Delta)$ to $\mathcal{O}(\sqrt{r / \Delta})$ where $r$ bounds the number of terms and also the operator norm of $H_F$. Similar results have been previously observed in several works on FF systems~\cite{somma2013spectral,gosset2016correlation,thibodeau2023nearly}, and most importantly, have also been proposed for special types of positive definite QLSPs~\cite{orsucci2021solving}. Consequently, for FF Hamiltonians, our algorithm yields:

\begin{theorem}[Informal]\label{theorem:ff}
For FF Hamiltonians $H_F$ with $r$ terms, assuming access to block encodings as above, there exists a quantum algorithm that estimates $\chi_F$ to additive error $\epsilon$ with high probability using $\mathcal{O}\left(\frac{\sqrt{r} \alpha_I^2}{\Delta^{2.5} \epsilon} \log \left(\frac{r \alpha_I}{\Delta \epsilon}\right)\right)$ queries to the block encoding of $H_F$ and its inverse, and $\mathcal{O}\left(\frac{\alpha_I^2}{\Delta^2 \epsilon}\right)$ queries to $U_I$ and its inverse.
\end{theorem}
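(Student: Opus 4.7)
The plan is to keep the three-step architecture of Theorem~\ref{theorem:main} --- (i) block-encode the pseudoinverse $(H_F)^+$, (ii) compose with the block encoding $U_I$ of $H_I$ to obtain a block encoding of $G = (H_F)^+ H_I$, and (iii) run quantum amplitude estimation on $G|\Psi_0\rangle$ to extract $\|G|\Psi_0\rangle\|_2^2 = \chi_F$ --- and replace only step (i) with a frustration-free-specific construction that saves a square-root factor in the degree of the pseudoinverse polynomial. The savings then propagate mechanically through steps (ii) and (iii).

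The new step (i) exploits the walk-operator construction available for any FF Hamiltonian $H_F = \sum_j \Pi_j$: the projector decomposition assembles a unitary whose $+1$-eigenspace is the common kernel spanned by $|\Psi_0\rangle$ and whose non-ground spectrum encodes $\sqrt{H_F/r}$ rather than $H_F/r$, effectively turning the inversion interval $[\Delta/\alpha_H,1]$ into $[\sqrt{\Delta/r},1]$. On this amplified operator I would apply QSVT with an odd polynomial approximation of $1/x$ on $[\sqrt{\Delta/r},1]$. A truncated Chebyshev expansion of the first kind (as used in prior FF inversion works and in Orsucci-style positive-definite QLSP algorithms) yields such an approximation to uniform precision $\epsilon'$ with degree $\mathcal{O}(\sqrt{r/\Delta}\,\log(1/\epsilon'))$, a quadratic improvement on the general QSVT bound $\mathcal{O}(\alpha_H/\Delta\,\log(1/\epsilon'))$. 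The projector $P^\perp$ appearing in the resolvent formula~\eqref{eq:key_formula} is built in for free, because the approximating polynomial is chosen to vanish at the amplified eigenvalue corresponding to the ground state, so no explicit subtraction of $|\Psi_0\rangle\langle\Psi_0|$ is needed.

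Plugging this improved block encoding into step (ii) gives a block encoding of $G$ with subnormalisation $\mathcal{O}(\alpha_I/\Delta)$, so step (iii) extracts $\chi_F$ to additive error $\epsilon$ by estimating a squared amplitude to additive precision $\mathcal{O}(\epsilon\,\Delta^2/\alpha_I^2)$, which under the Heisenberg scaling of QAE requires $\mathcal{O}(\alpha_I^2/(\Delta^2\epsilon))$ calls to $G$ and hence to $U_I$; multiplying by the per-call cost $\mathcal{\widetilde{O}}(\sqrt{r/\Delta})$ of the FF inversion circuit recovers the claimed $\mathcal{\widetilde{O}}(\sqrt{r}\,\alpha_I^2/(\Delta^{2.5}\epsilon))$ queries to the block encoding of $H_F$. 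The main obstacle is the approximation-theory step: one must exhibit an odd polynomial that simultaneously (a) approximates the pseudoinverse of $H_F$ within $\epsilon'$ on $[\Delta,\|H_F\|]$, (b) remains bounded by $1$ on $[-1,1]$ so that a valid QSVT phase sequence exists, and (c) has degree $\mathcal{O}(\sqrt{r/\Delta}\,\log(1/\epsilon'))$. Meeting all three requires a careful change of variables between the $\sqrt{\cdot}$ built into the walk operator and the $1/x$ one actually wants to implement, but it closely parallels established constructions for FF ground-state projection and positive-definite QLSPs. Once this polynomial is in hand, the composition of block encodings, the automatic realisation of $P^\perp$, and the QAE error budget all repeat the Theorem~\ref{theorem:main} analysis verbatim with the improved degree bound.
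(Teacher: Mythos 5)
Your proposal follows essentially the same route as the paper: spectral amplification of the FF Hamiltonian via its projector decomposition to obtain a normalization-one block encoding, a Chebyshev/Orsucci-style inversion polynomial of quadratically reduced degree $\mathcal{O}(\sqrt{r/\Delta}\,\log(1/\epsilon'))$, and then the product-plus-QAE error budget of Theorem~\ref{theorem:main} repeated verbatim. The one loose end --- which you flag yourself as the ``change of variables'' --- is that applying $1/x$ to the square-rooted spectrum $[\sqrt{\Delta/r},1]$ inverts the amplified operator $H_{\mathrm{SA}}$ rather than $H_F$ itself; the paper sidesteps this by block-encoding $I-2H_F/r$ with unit normalization and approximating $1/(1-x)$ near the spectral edge $x=1$ (via Proposition~12 of Orsucci et al.), whereas your variant would need one additional composition of the inverted block encoding with its adjoint to square the singular values and land on $H_F^+$.
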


This improvement highlights how Hamiltonian structure can be exploited for algorithmic gains, similar to speedups in ground-state preparation~\cite{thibodeau2023nearly} or spectral gap amplification~\cite{somma2013spectral} for FF systems, and suggests avenues for tailoring algorithms to specific physical models.

\vspace{0.15cm}

\emph{Applications---}Our quantum algorithm for estimating the fidelity susceptibility naturally extends to several related quantities in quantum many-body physics and metrology, each of which admits a resolvent formulation amenable to the same block-encoding and amplitude-estimation techniques. Below, we discuss two prominent examples: linear static susceptibilities and quantum Fisher information. We provide a brief background, the original definition (typically a sum over excited states), and a derivation of the resolvent form in the main text.

% \subsection{Linear Static Susceptibilities}
Linear static susceptibilities~\cite{kubo1957statistical,fetter2012quantum} quantify the response of a system to external static perturbations, such as magnetic fields (magnetic susceptibility) or electric fields (charge or dielectric susceptibility). They play a crucial role in characterizing equilibrium properties, phase transitions, and material responses in condensed matter physics. For instance, the magnetic susceptibility diverges at critical points in ferromagnetic transitions, serving as an order parameter diagnostic~\cite{stanley1971phase,fisher1998renormalization}.
The original definition arises from linear response theory, where the expectation value of an observable $\hat{O}$ (e.g., total spin $S_z$ for magnetic susceptibility) responds to a perturbation $-h \hat{O}$ added to the Hamiltonian. Defining the static susceptibility by
\(\chi:=\partial\langle\hat O\rangle/\partial h\big|_{h=0}\), one obtains for a nondegenerate ground state \(|\Psi_0\rangle\) the standard spectral (Lehmann) sum at \(T=0\):
\begin{equation}\label{eq:chi_zeroT}
\chi \;=\; 2\sum_{j\neq0}\frac{|\langle\Psi_j|\hat O|\Psi_0\rangle|^2}{E_j-E_0}\,,
\end{equation}
where \(H|\Psi_j\rangle=E_j|\Psi_j\rangle\).
As such, we find the linear static susceptibility shares a similar form to the fidelity susceptibility, with the only change from $H_I$ to $\hat{O}$ and an overall factor $2$. Our algorithm thus applies straightforwardly. 

% \subsection{Quantum Fisher Information}
The quantum Fisher information (QFI)~\cite{braunstein1994statistical} quantifies the maximum precision in estimating a parameter $\lambda$ encoded in a quantum state, via the quantum Cram\'er-Rao bound~\cite{braunstein1994statistical} $\Delta \lambda \geq 1 / \sqrt{\nu F(\lambda)}$, where $\nu$ is the number of measurements. For pure states, it relates directly to state sensitivity and metrological advantage in sensors~\cite{paris2009quantum} and the corresponding sum form is 
\begin{equation}
F(\lambda) = 4 \sum_{j \neq 0} \frac{|\langle \Psi_j | \partial_\lambda H | \Psi_0 \rangle|^2}{(E_j - E_0)^2} = 4 \chi_F,
\end{equation}
where $\chi_F$ is the fidelity susceptibility if the driving is $\partial_\lambda H$. As such, our method also provides a way to estimate the QFI.

\vspace{0.15cm}
% \subsection{Counterdiabatic driving}
% The counterdiabatic (CD) driving

% The evolution is governed by
% \begin{equation}
%     i \partial_t|\psi\rangle=\left[H(\lambda)+\dot{\lambda}(t) A_\lambda\right]|\psi\rangle,
% \end{equation}
% where $\lambda$ is a control parameter such that $\lambda(t)=0$ and $\lambda(t)=1$ correspond to the initial and final Hamiltonians, and $A_\lambda$ is the counterdiabatic driving term. Its exact form is known as the adiabatic gauge potential (AGP):
% \begin{equation}
%     \left\langle\Psi_m\right| A_\lambda\left|\Psi_n\right\rangle=i\left\langle\Psi_m\right| \partial_\lambda\left|\Psi_n\right\rangle=-i \frac{\left\langle\Psi_m\right| \partial_\lambda H\left|\Psi_n\right\rangle}{E_m-E_n},
% \end{equation}

\emph{Conclusion}---We have presented a quantum algorithm for estimating the fidelity susceptibility of quantum many-body systems, achieving Heisenberg-limited precision with query complexity $\mathcal{\widetilde{O}}(\alpha_H\alpha_I^2/\Delta^3\epsilon)$. Our approach, based on a novel resolvent formulation of the fidelity susceptibility, circumvents the need for explicit diagonalization or excited state computation that limits classical methods. By leveraging quantum singular value transformation for matrix inversion and quantum amplitude estimation for norm computation, we provide an end-to-end quantum protocol suitable for implementation on fault-tolerant quantum computers.
A particularly striking result is the quadratic speedup achieved for frustration-free Hamiltonians for matrix inversion, where the query complexity improves to $\mathcal{\widetilde{O}}(\sqrt{r}\alpha_I^2/\Delta^{2.5}\epsilon)$. This enhancement stems from the special spectral properties of frustration-free systems, which allow for more efficient polynomial approximations of the inverse function near the spectral edge. This finding adds to the growing body of evidence that frustration-free systems exhibit fundamentally different computational properties, joining other examples such as quadratic improvements in spectral gap amplification and correlation length bounds.

Our work opens several avenues for future research. The extension to degenerate ground states, relevant for topological phases and symmetry-protected ordered phases, requires careful treatment of the projector onto the ground state manifold. The algorithm could be adapted to compute other response functions and susceptibilities by modifying the driving operator $H_I$. 
% Furthermore, the resolvent framework may find applications in computing dynamical response functions and spectral densities, which share similar mathematical structures.
From a practical perspective, our algorithm provides a path toward quantum advantage in studying phase transitions in strongly correlated systems. As quantum hardware continues to improve, the ability to compute fidelity susceptibility for systems beyond classical reach will enable the discovery and characterization of novel quantum phases of matter. The connection to quantum Fisher information also suggests applications in quantum sensing and parameter estimation, where the fidelity susceptibility determines fundamental precision limits.

\vspace{0.15cm}

\emph{Acknowledgments}---This work is supported by the National Natural Science Foundation of China Grant (No.~12361161602), NSAF (Grant No.~U2330201), the Innovation Program for Quantum Science and Technology (Grant No.~2023ZD0300200), and the High-performance Computing Platform of Peking University.

% \clearpage
\bibliography{ref}
\clearpage

\widetext
\section*{Supplementary Information}
\appendix

\section{Quantum Singular Value Transformation}
In this section, we introduce the QSVT algorithm. The algorithm provides a unified framework for designing quantum algorithms. The algorithm performs a singular mapping for a given matrix: $A\mapsto f^{(\mathrm{SV})}(A)$ with $f(x)$ some analytic function. For normal matrices such that $[A,A^\dagger]=0$, the transformation reduces to the eigenvalues. It is thus seen why such a framework can unify widespread quantum algorithms~\cite{martyn2021grand}, since most quantum algorithms boil down to a task for performing some matrix function.

The QSVT algorithm essentially works in the language of block encoding (BE), which is defined as:
\begin{definition}[Block Encoding]
\label{def:block_encoding}
Let $A \in \mathbb{C}^{2^n \times 2^n}$ be a matrix. A unitary operator $U \in \mathbb{C}^{2^{m+n} \times 2^{m+n}}$ is called an $(\alpha_A, m, \epsilon)$-block encoding of $A$ if
\begin{equation}\label{eq:be}
\left\| A - \alpha_A \left( \bra{0^m} \otimes I_n \right) U_A \left( \ket{0^m} \otimes I_n \right) \right\| \leq \epsilon,
\end{equation}
where $\alpha_A \geq \|A\|$ is the normalization factor, $m$ is the number of ancilla qubits, and $\epsilon \geq 0$ is the encoding error.
\end{definition}

The block encoding allows flexible operations, such as taking the product of two block encoded matrices, as shown below.
\begin{lemma}[Product of two block-encoded matrices {\cite[Adapted from Lemma 53]{gilyen2019quantum}}]\label{lemma:block_prod}
Let $U_A$ be a $(\alpha_A, a, \epsilon_1)$-block encoding of $A$. Let $U_B$ be a $(\alpha_B, b, \epsilon_2)$-block encoding of $B$. Then, $(I_b\otimes U_A)(I_a \otimes U_B)$ is a $(\alpha_A\alpha_B, a+b, \alpha_A\epsilon_2+\alpha_B\epsilon_1)$-block encoding of $AB$.
\end{lemma}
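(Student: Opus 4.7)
The plan is to verify Definition~\ref{def:block_encoding} directly for the unitary $V := (I_b \otimes U_A)(I_a \otimes U_B)$, showing that it encodes $AB$ with normalization $\alpha_A\alpha_B$, $a+b$ ancilla qubits, and error $\alpha_A\epsilon_2+\alpha_B\epsilon_1$. It is convenient to introduce the shorthand $\widetilde{A} := \alpha_A(\langle 0^a|\otimes I_n)\,U_A\,(|0^a\rangle\otimes I_n)$ and $\widetilde{B} := \alpha_B(\langle 0^b|\otimes I_n)\,U_B\,(|0^b\rangle\otimes I_n)$ for the operators actually realized by $U_A$ and $U_B$. By hypothesis $\|A-\widetilde{A}\|\leq\epsilon_1$ and $\|B-\widetilde{B}\|\leq\epsilon_2$, and since $U_A$, $U_B$ are unitary we additionally get $\|\widetilde{A}\|\leq\alpha_A$ and $\|\widetilde{B}\|\leq\alpha_B$ by submultiplicativity of the operator norm.

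First I would carry out the tensor-product bookkeeping to identify the effective block of $V$. Writing $|0^{a+b}\rangle = |0^a\rangle|0^b\rangle$, the factor $I_a\otimes U_B$ leaves the $a$-register untouched while $I_b\otimes U_A$ leaves the $b$-register untouched, so the ancilla projector factorizes across the two registers and gives
\begin{equation*}
(\langle 0^{a+b}|\otimes I_n)\,V\,(|0^{a+b}\rangle\otimes I_n) \;=\; \frac{\widetilde{A}\,\widetilde{B}}{\alpha_A\,\alpha_B}.
\end{equation*}
Hence $V$ is an exact $(\alpha_A\alpha_B,\,a+b,\,0)$-block encoding of $\widetilde{A}\widetilde{B}$, and the only thing left is to control the discrepancy between $\widetilde{A}\widetilde{B}$ and the true target $AB$.

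Next I would bound $\|AB-\widetilde{A}\widetilde{B}\|$ by a standard hybrid trick,
\begin{equation*}
AB - \widetilde{A}\widetilde{B} \;=\; A\bigl(B-\widetilde{B}\bigr) + \bigl(A-\widetilde{A}\bigr)\widetilde{B},
\end{equation*}
and apply submultiplicativity together with $\|A\|\leq\alpha_A$ (built into the block-encoding hypothesis on $U_A$) and $\|\widetilde{B}\|\leq\alpha_B$. This yields $\|AB-\widetilde{A}\widetilde{B}\|\leq \alpha_A\epsilon_2 + \alpha_B\epsilon_1$, matching the claimed error. Combined with the previous paragraph, this verifies all three parameters in Definition~\ref{def:block_encoding}.

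The main obstacle is essentially notational: one has to be careful that the convention behind the shorthand $(I_b\otimes U_A)(I_a\otimes U_B)$ is the one that makes the ancilla projector split cleanly across the $a$- and $b$-registers; some references build in an implicit SWAP of the two ancilla blocks so that each $U$ acts on its own ancilla plus the common system register. Once this convention is pinned down, everything else is a two-line hybrid estimate, so I expect no genuine technical difficulty beyond this accounting.
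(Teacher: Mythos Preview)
Your proof is correct and is essentially the standard argument. The paper does not give its own proof of this lemma---it simply quotes it from \cite{gilyen2019quantum}---and your verification via the factorized ancilla projection followed by the hybrid decomposition $AB-\widetilde A\widetilde B=A(B-\widetilde B)+(A-\widetilde A)\widetilde B$ is exactly the proof that appears there.
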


% \begin{lemma}[Block encoding amplification {\cite[Theorem 2]{low2017hamiltonian}}]
% \label{lem:amp_block}
%     Let $C$ be a matrix such that $C/\alpha$ is block encoded by $O_C$ with some normalization factor $\alpha$.
%     Then given any $\alpha_C\geq\norm{C}$, the operator
%     \begin{equation}
%         \frac{C}{2\alpha_C}
%     \end{equation}
%     can be block encoded with accuracy $\epsilon$ using
%     \begin{equation}
%         \mathcal{O}\left(\frac{\alpha}{\alpha_C}\log\left(\frac{1}{\epsilon}\right)\right)
%     \end{equation}
%     queries to the controlled-$O_C$ and its inverse.
% \end{lemma}

Finally, we provide the main results for performing matrix mapping from Ref.~\cite{gilyen2019quantum}.
\begin{lemma}[matrix function of real polynomials {\cite[Adapted from Corollary 18]{gilyen2019quantum}}]\label{lemma:qsvt}
Let $U$ be a $(\alpha_A, a, 0)$-block encoding of matrix $A$. Let $p_{\Re} \in \mathbb{R}[x]$ be a degree-$m$ polynomial function such that
\begin{itemize}
    \item $p_{\Re}$ has parity $(m~\text{mod}~2)$ and
    \item $|p_{\Re}|\leq 1,~\forall x\in[-1,1]$.
\end{itemize}
Then, there exists an algorithm construct the $(1,a+1,0)$-block encoding of $p_{\Re}^{\mathrm{(SV)}}(A/\alpha_A)$ with $p_{\Re}^{\mathrm{(SV)}}$ denote the function acting on singular values using $m$ queries to $U$, $U^\dagger$ and $\mathcal{O}((a+1)m)$ other primitive quantum gates.
\end{lemma}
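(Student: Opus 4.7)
The plan is to reduce the problem to quantum signal processing (QSP) acting on each singular-value subspace, then assemble the pieces into a global block encoding. First I would invoke the singular value decomposition $A/\alpha_A = \sum_i \sigma_i |u_i\rangle \langle v_i|$ together with the block-encoding structure from \eq{be} to show that $U$, the projector $\Pi = |0^a\rangle\langle 0^a| \otimes I_n$, and its image $U \Pi U^\dagger$ jointly decompose the ambient Hilbert space into a direct sum of (at most) two-dimensional invariant subspaces indexed by $\sigma_i \in [0,1]$. In each such 2D subspace, the restriction of $U$ is conjugate to the rotation $\bigl(\begin{smallmatrix} \sigma_i & -\sqrt{1-\sigma_i^2} \\ \sqrt{1-\sigma_i^2} & \sigma_i \end{smallmatrix}\bigr)$, and the projector-controlled phase shift $e^{i\phi(2\Pi - I)}$ reduces to a diagonal SU(2) phase $e^{\pm i\phi}$ along the same axis.

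Next I would appeal to the QSP existence theorem: for any real polynomial $p_\Re$ of degree $m$, of parity $m \bmod 2$, with $|p_\Re(x)| \le 1$ on $[-1,1]$, there exist phase factors $\phi_0,\dots,\phi_m \in \mathbb{R}$ such that the alternating product of $m$ applications of $U$ and $U^\dagger$ interleaved with $e^{i\phi_k(2\Pi-I)}$ implements, in every 2D invariant subspace simultaneously, an SU(2) element whose top-left matrix element equals $p_\Re(\sigma_i) + i\,q(\sigma_i)$ for some auxiliary polynomial $q$. A Hadamard-test-style linear-combination-of-unitaries that averages this sequence with its complex conjugate cancels the imaginary part $q$; this step consumes exactly one extra ancilla qubit, accounting for the ``$+1$'' in $(1,a+1,0)$.

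I would then verify that the composite unitary is a $(1, a+1, 0)$-block encoding of $p_\Re^{(\mathrm{SV})}(A/\alpha_A)$ by checking that in each 2D invariant subspace the map $|v_i\rangle \mapsto p_\Re(\sigma_i)|u_i\rangle$ is induced, which is precisely the definition of the singular-value transformation. The left/right singular vectors are preserved by construction because the alternating sequence respects the invariant-subspace decomposition. The resource count follows directly: $m$ queries to $U$ and $U^\dagger$; each projector-controlled phase can be synthesized from a multi-controlled phase gate on the $a$ ancillas plus the LCU qubit, costing $\mathcal{O}(a+1)$ primitive gates per layer, for $\mathcal{O}((a+1)m)$ overall, consistent with the statement and with \cref{lemma:block_prod}-style composition accounting.

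The main obstacle is the QSP existence theorem itself, which is the technical heart of the argument. Establishing that every admissible real $p_\Re$ arises from some phase sequence requires a constructive recursion on the degree, factoring the Laurent polynomial associated with the Chebyshev expansion of $p_\Re$ and peeling off one phase per iteration; this is where the parity and sup-norm hypotheses enter in an essential way. A secondary subtlety is the reality constraint: the native QSP sequence produces complex polynomials, so the LCU symmetrization (or equivalently a ``real QSP'' reformulation) must be handled carefully to preserve the block-encoded polynomial while adding only one ancilla qubit. I would cite the QSP theorem for the phase-factor existence and focus the argument on the invariant-subspace decomposition and the ancilla bookkeeping.
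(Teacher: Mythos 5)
Your proposal is correct and follows essentially the same route as the source of this lemma: the paper offers no proof of its own but imports it verbatim from Corollary 18 of Gil\'yen et al., whose argument is exactly the Jordan-lemma decomposition into $2\times 2$ invariant subspaces, the QSP phase-factor existence theorem for complex polynomials, and the one-ancilla LCU symmetrization of the sequence with its conjugate to isolate the real part $p_{\Re}$. Your resource accounting ($m$ queries, $\mathcal{O}((a+1)m)$ gates from the projector-controlled phases) also matches the cited result.
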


\section{Main results}

\subsection{Block encoding of the $G$ operator}
The most involved part for block encoding is the inverse of $(H-E_0)$. The inverse of a matrix is well-studied throughout the history of quantum computation, as this is a basic problem for the QLSP~\cite{harrow2009quantum}. That is, given $A\ket{x}=\ket{b}$, we want to (approximately) solve for $\ket{x}$ within error $\epsilon$ by inverting $A$. The crucial quantity is the condition number of $A$, defined as $\kappa(A)=\norm{A}\norm{A^{-1}}$.

The seminal HHL algorithm~\cite{harrow2009quantum} inaugurated this study and proved BQP-hardness for the problem. A long line of work applying different strategies, such as adiabatic evolution~\cite{subacsi2019quantum,costa2022optimal}, the quantum Zeno effect~\cite{an2022quantum}, and QSVT~\cite{gilyen2019quantum}. It wasn't until recently that the optimal quantum algorithm was proposed, with the query complexity of BE of $A$ as $\Theta(\kappa\log(1/\epsilon))$, utilizing a refined discrete adiabatic formula~\cite{costa2022optimal}.

The main difference between our scenario and the QLSP is that we do not need to perform the matrix inversion on a state vector, but only block encode the inversion. As such, we consider the QSVT algorithm to directly approximate the $\frac{1}{x}$ function for our input matrix.

We assume that we have access to the $(\alpha_H, m, 0)$-block encoding of $(H-E_0)$ such that 
\begin{equation}
    \alpha_H\geq \norm{H-E_0}.
\end{equation}
Here, $m$ is the number of ancilla qubits. In principle, we could obtain such a BE through the linear combination of unitaries (LCU) method~\cite{childs2012hamiltonian}, since the Hamiltonian can be expanded in the Pauli basis. Besides, because we have access to the ground state $\ket{\Psi_0}$ of $H$, using methods like quantum phase estimation, we can get an estimation of $E_0$ to an arbitrary desirable accuracy. For simplicity, we here assume that we know $E_0$ exactly.

\begin{lemma}[Polynomial approximation of $\frac{1}{x}$ {\cite[Corollary 69]{gilyen2019quantum}}]\label{lemma:inverse_func}
Let $\varepsilon, \delta \in\left(0, \frac{1}{2}\right]$. There exists an odd polynomial function $p_d(x)\in\mathbb{R}(x)$ of degree
\begin{equation}
    d=\mathcal{O}\left(\frac{1}{\delta} \log \left(\frac{1}{\varepsilon}\right)\right)
\end{equation}
such that
\begin{equation}
    \sup_{x\in T}|p_d(x)-f(x)|\leq \varepsilon,
\end{equation}
where
\begin{equation}\label{eq:inverse_interval}
    x\in[-1,1] \backslash [-\delta, \delta],
\end{equation}
and $f(x)=\frac{3}{4} \frac{\delta}{x}$. Besides, $p_d(x)$ is bounded by $1$ in absolute value.
\end{lemma}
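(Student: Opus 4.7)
The plan is to construct $p_d$ explicitly as an odd polynomial approximating $f(x) = 3\delta/(4x)$ on $T = [-1,-\delta]\cup[\delta,1]$ via a truncated orthogonal polynomial expansion, then verify the uniform bound on the entire interval $[-1,1]$. The normalization factor $3\delta/4$ is chosen precisely so that $|f(x)|\le 3/4$ on $T$, leaving a margin of $1/4$ to absorb the approximation error while keeping $|p_d(x)|\le 1$; any proof strategy must exploit this margin.

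The key analytic input is that $1/x$ is meromorphic on $\mathbb{C}$ with a single simple pole at $x=0$, whose distance $\delta$ from $T$ controls the convergence rate of any polynomial expansion. First, I would affinely rescale $[\delta,1]\to[-1,1]$ and expand the resulting function in its Chebyshev series. Standard Bernstein-ellipse estimates (see, e.g., Ref.~\cite{trefethen2019approximation}) show that the rescaled $1/x$ is analytic on an ellipse $E_\rho$ with $\rho-1$ polynomial in $\delta$, so its Chebyshev coefficients decay geometrically at rate $\rho^{-k}$. Truncating at degree $d = \mathcal{O}((1/\delta)\log(1/\varepsilon))$ then yields a polynomial $q$ with uniform error at most $\varepsilon$ on $T$ after multiplication by $3\delta/4$. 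Symmetrizing via $p_d(x) = (q(x)-q(-x))/2$ produces an odd polynomial of the same degree that still approximates $f$ to within $\varepsilon$ on $T$, since $f$ is odd there.

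The main obstacle is the uniform bound $|p_d(x)|\le 1$ on the excluded interval $[-\delta,\delta]$, where there is no direct target and where the pole of $1/x$ would naively cause the truncated polynomial to blow up. To resolve this, following Ref.~\cite{gilyen2019quantum}, I would multiply the truncated series by an auxiliary low-degree polynomial that approximates the indicator of $T$—close to $1$ on $T$ and close to $0$ on $[-\delta,\delta]$—which damps the combined polynomial on the singular region without spoiling accuracy on $T$. Constructing this masking polynomial, composing it with the Chebyshev truncation, and verifying that the total degree remains $\mathcal{O}((1/\delta)\log(1/\varepsilon))$ with odd parity, uniform bound $1$, and approximation error $\varepsilon$ simultaneously satisfied is the technically delicate step; it relies on known near-optimal polynomial approximations of sign and rectangle functions with matching degree scaling.
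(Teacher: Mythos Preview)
The paper does not supply its own proof of this lemma: it is quoted verbatim as \cite[Corollary~69]{gilyen2019quantum} and then used as a black box in the subsequent construction of the block encoding of $(H-E_0)^+$. Consequently there is nothing in the present paper to compare your argument against.

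Your sketch is broadly in the spirit of how the result is obtained in the cited reference---a Chebyshev-type expansion whose geometric decay is governed by the distance $\delta$ of the pole from the approximation domain, combined with a bounded ``masking'' polynomial (built from near-optimal sign/rectangle approximations) that suppresses the value on $[-\delta,\delta]$ without degrading the degree scaling. One technical point worth tightening: your symmetrization step $p_d(x)=\tfrac{1}{2}(q(x)-q(-x))$ requires control of $q$ on $[-1,-\delta]$ as well as on $[\delta,1]$, but your Bernstein-ellipse construction, as written, only controls $q$ on the rescaled image of $[\delta,1]$. You would either need to approximate directly on the symmetric set $T$ (where the best uniform approximant to an odd target is automatically odd), or apply the masking polynomial \emph{before} symmetrizing so that $q(-x)$ is already damped. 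This is a fixable detail rather than a genuine gap, and since you explicitly defer to \cite{gilyen2019quantum} for the masking construction, your proposal is an acceptable outline of the cited proof.
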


Using the above lemma, we have the following results regarding the block encoding of a matrix inverse.

\begin{lemma}[Block encoding of $A^{+}$]\label{lemma:block_inverse}
Let $\epsilon, \delta \in\left(0, \frac{1}{2}\right]$. Given a $(\alpha_A,m,0)$-block encoding $U_A$ of a positive semi-definite $n$-qubit operator $A$ with $\alpha_A\geq \norm{A}$, denote its smallest nonzero eigenvalue as $\eta$. There exists a quantum circuit that implements the $(\alpha_{A'},m+1,0)$-block encoding unitary $U_p$ of $p_d(\alpha_A^{-1} A)$ such that
\begin{equation}
    \norm{\alpha_{A'}\bra{0^{m+1}} U_p\ket{0^{m+1}}-A^+}\leq \epsilon,
\end{equation}
where $A^+$ is the Moore-Penrose inverse of $A$, $p_d(x)\in\mathbb{R}(x)$ is an odd polynomial function of degree
\begin{equation}
    d=\mathcal{O}\left(\frac{\alpha_A}{\eta} \log \left(\frac{1}{\eta\epsilon}\right)\right),
\end{equation}
and $\alpha_{A'}=\frac{4}{3\eta}$ using $d$ queries to $U_A$ and $U_A^\dagger$ along with $\mathcal{O}(d(m+n+1))$ elementary quantum gates.
\end{lemma}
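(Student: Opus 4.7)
The plan is to combine the polynomial approximation of $1/x$ from Lemma~\ref{lemma:inverse_func} with the QSVT construction from Lemma~\ref{lemma:qsvt}. Since $U_A$ is a $(\alpha_A, m, 0)$-block encoding of $A$, the effective input to QSVT is $A/\alpha_A$, whose eigenvalues lie in $[0,1]$ because $A$ is positive semi-definite and $\alpha_A \geq \|A\|$; the nonzero eigenvalues therefore lie in $[\eta/\alpha_A,\, 1]$. I would set the gap parameter in Lemma~\ref{lemma:inverse_func} to $\delta := \eta/\alpha_A$ and an internal accuracy $\varepsilon$ to be fixed later, yielding an odd polynomial $p_d$ of degree $d = \mathcal{O}\bigl(\tfrac{\alpha_A}{\eta}\log(1/\varepsilon)\bigr)$ that uniformly approximates $\tfrac{3\delta}{4x}$ on the nonzero part of the spectrum of $A/\alpha_A$, while remaining bounded by $1$ on $[-1,1]$ (exactly what Lemma~\ref{lemma:qsvt} requires).

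Next, I would invoke Lemma~\ref{lemma:qsvt} to construct, using $d$ queries to $U_A$ and $U_A^\dagger$ together with $\mathcal{O}(d(m+n+1))$ additional primitive gates, a $(1, m+1, 0)$-block encoding $U_p$ of $p_d^{(\mathrm{SV})}(A/\alpha_A)$. Because $A$ is positive semi-definite, its singular values coincide with its eigenvalues, so $p_d^{(\mathrm{SV})}(A/\alpha_A) = p_d(A/\alpha_A)$. I would then declare the normalization factor $\alpha_{A'} := \tfrac{4}{3\eta}$, chosen so that on each nonzero eigenvalue $\mu \in [\eta/\alpha_A,1]$ of $A/\alpha_A$ we have $\alpha_{A'}\, p_d(\mu) \approx \tfrac{4}{3\eta}\cdot\tfrac{3\eta}{4\alpha_A\mu} = (\alpha_A\mu)^{-1}$, which is exactly the action of $A^+$ on the corresponding eigenvector.

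Error control is then a purely spectral calculation: on $\sigma(A/\alpha_A)\setminus\{0\}$ we have $|\alpha_{A'}\, p_d(\mu) - (\alpha_A\mu)^{-1}| \leq \tfrac{4\varepsilon}{3\eta}$, so $\|\alpha_{A'}\, p_d(A/\alpha_A) - A^+\| \leq \tfrac{4\varepsilon}{3\eta}$ away from the kernel. Setting $\varepsilon := \tfrac{3\eta\epsilon}{4}$ makes this at most $\epsilon$ and, substituted back into the degree formula, yields $d = \mathcal{O}\bigl(\tfrac{\alpha_A}{\eta}\log(1/(\eta\epsilon))\bigr)$ as advertised. Finally, $\alpha_{A'} = 4/(3\eta) \geq 1/\eta = \|A^+\|$, consistent with Definition~\ref{def:block_encoding}.

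The only real subtlety I anticipate is the treatment of the kernel of $A$: Lemma~\ref{lemma:inverse_func} guarantees accuracy only for $|x| > \delta$, whereas QSVT applies $p_d$ to the entire spectrum of $A/\alpha_A$, including possibly the eigenvalue $0$. The observation that closes the argument is the parity of $p_d$: since $p_d$ is odd, $p_d(0) = 0$ exactly, which matches the fact that $A^+$ annihilates $\ker A$; hence the ``forbidden'' interval $[-\delta,\delta]$ collapses, from the spectral viewpoint, to the single point $0$ and contributes no error. Everything else is bookkeeping of the extra ancilla qubit introduced by Lemma~\ref{lemma:qsvt} and of the overall normalization $\alpha_{A'}$.
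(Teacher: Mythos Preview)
Your proposal is correct and follows essentially the same route as the paper's proof: choose $\delta=\eta/\alpha_A$ in Lemma~\ref{lemma:inverse_func}, feed the resulting odd polynomial into Lemma~\ref{lemma:qsvt}, rescale by $\alpha_{A'}=4/(3\eta)$, and fix the internal accuracy $\varepsilon=\tfrac{3}{4}\eta\epsilon$; the paper likewise handles the kernel by invoking $p_d(0)=0$ from oddness. Your write-up is in fact slightly more explicit than the paper's (e.g.\ the remark that positive semi-definiteness makes the singular value transform coincide with the eigenvalue functional calculus, and the sanity check $\alpha_{A'}\ge\|A^+\|$), but the argument is the same.
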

\begin{proof}
We note that the matrix inside $U_A$ is $\alpha_A^{-1} A$ such that
$$U_A=
\left(\begin{array}{cc}
    \alpha_A^{-1} A & * \\
    * & *
\end{array}\right).
$$
The smallest nonzero eigenvalue of the block encoded matrix is $\alpha_A^{-1}\eta$, which determines the spectral radius of the inverse matrix. 

Consider the function given by Lemma \ref{lemma:inverse_func} with the input being the matrix $\alpha_A^{-1} A$, we have
\begin{equation}
    \norm{p_d(\alpha_A^{-1} A)-f(\alpha_A^{-1} A)}=\norm{p_d(\alpha_A^{-1} A)-\frac{3}{4}\eta A^+}=\sup_{x\in T}|p_d(x)-f(x)|\leq \varepsilon,
\end{equation}
where $f(x)=\frac{3}{4}\frac{\delta}{x}$. For the polynomial approximation, the $\delta$ in Eq.~\eqref{eq:inverse_interval} is chosen to be $\delta=\alpha_A^{-1}\eta$. Here, the polynomial function approximates the Moore-Penrose inverse because it is an odd function, so that $p_d(0)=0$.

Now, when transforming to the block encoding, the approximation error becomes
\begin{equation}
    \norm{\alpha_{A'}\bra{0^{m+1}} U_p\ket{0^{m+1}}-A^+}=\norm{\alpha_{A'} p_d(\alpha_A^{-1} A)-A^+}\leq \alpha_{A'}\varepsilon.
\end{equation}

Finally, to let $\alpha_{A'}\varepsilon\leq \epsilon$, we have 
\begin{equation}
    \varepsilon \leq \frac{3}{4}\eta \epsilon.
\end{equation}
This explains the claimed degree and complexity.
\end{proof}

We note that the block encoding unitary of the polynomial function is given by
\begin{equation}
U_p=
\left(\begin{array}{cc}
    \alpha_{A'}^{-1} p_d(\alpha_A^{-1} A) & * \\
    * & *
\end{array}\right).
\end{equation}

Our next step is to show the block encoding of $G$.
\begin{theorem}[Block encoding of $G$]\label{theorem:be_g}
Let $\epsilon\in\left(0,\frac{1}{2}\right]$. Let $U_H$ be an $(\alpha_H, m_1,0)$-block encoding of $n$-qubit operator $(H-E_0)$. Let $U_I$ be an $(\alpha_I, m_2,0)$-block encoding of $n$-qubit operator $H_I$. Suppose that $H$ has a nondegenerate ground state, and we have knowledge of the lower bound $\Delta$ of the spectral gap such that $\Delta\leq E_1-E_0$.

Then, there exists a quantum circuit that implements an $(\alpha_Q,m_1+m_2+1,0)$-block encoding $U_Q$ of $Q$ such that
\begin{equation}
    \norm{\alpha_Q\bra{0^{m_1+m_2+1}} U_Q \ket{0^{m_1+m_2+1}} -G}\leq \epsilon,
\end{equation}
where $\alpha_Q=\frac{4\alpha_I}{3\Delta}$ and $G=(H-E_0)^+H_I$, using
\begin{equation}
    \mathcal{O}\left(\frac{\alpha_H}{\Delta} \log \left(\frac{\alpha_I}{\Delta\epsilon}\right)\right)
\end{equation}
queries to $U_H$ and its inverse along with one query to $U_I$ and $\mathcal{O}\left((n+m_1+m_2+1)\frac{\alpha_H}{\Delta} \log \left(\frac{\alpha_I}{\Delta\epsilon}\right)\right)$ elementary gates.
\end{theorem}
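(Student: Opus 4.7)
The plan is to chain the two block-encoding primitives already established: the pseudoinverse construction of Lemma~\ref{lemma:block_inverse} and the multiplication rule of Lemma~\ref{lemma:block_prod}. First I would observe that $H-E_0$ is positive semi-definite (since $E_0$ is the ground energy), with its unique zero eigenvalue on $|\Psi_0\rangle$ and all other eigenvalues at least $E_1-E_0\geq \Delta$, so $U_H$ satisfies the hypothesis of Lemma~\ref{lemma:block_inverse}. Instantiating the polynomial of Lemma~\ref{lemma:inverse_func} with $\delta=\Delta/\alpha_H$ (the known lower bound, in place of the unknown exact smallest nonzero eigenvalue) yields an odd polynomial that uniformly approximates $\tfrac{3\Delta}{4x}$ on $[\Delta/\alpha_H,1]$ — a superset of the normalized nonzero spectrum of $(H-E_0)/\alpha_H$ — while vanishing at zero, which is precisely what the Moore--Penrose inverse demands on the ground-state direction. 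The normalization factor therefore becomes $\alpha_{A'}=4/(3\Delta)$.

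Next I would invoke Lemma~\ref{lemma:block_inverse} with a to-be-chosen internal precision $\epsilon_p$, obtaining a $\bigl(\tfrac{4}{3\Delta},\,m_1+1,\,\epsilon_p\bigr)$-block encoding $U_p$ of $(H-E_0)^+$ at cost $d_p=\mathcal{O}\bigl(\tfrac{\alpha_H}{\Delta}\log\tfrac{1}{\Delta\epsilon_p}\bigr)$ queries to $U_H$ and $U_H^\dagger$ plus $\mathcal{O}(d_p(n+m_1+1))$ elementary gates. I would then compose $U_p$ with $U_I$ via Lemma~\ref{lemma:block_prod}; since $U_I$ is an exact block encoding ($\epsilon_2=0$), the composed circuit is a $\bigl(\tfrac{4\alpha_I}{3\Delta},\,m_1+m_2+1,\,\alpha_I\epsilon_p\bigr)$-block encoding of $G=(H-E_0)^+H_I$, matching the announced normalization $\alpha_Q=4\alpha_I/(3\Delta)$ automatically.

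Finally I would set $\epsilon_p=\epsilon/\alpha_I$ so that the composite error is at most $\epsilon$; this upgrades the query count to $d_p=\mathcal{O}\bigl(\tfrac{\alpha_H}{\Delta}\log\tfrac{\alpha_I}{\Delta\epsilon}\bigr)$ while adding only a single query to $U_I$, with elementary-gate count following directly from the QSVT budget of Lemma~\ref{lemma:qsvt}. The only subtle point — and in that sense the main obstacle — is the error-amplification bookkeeping in the product step: Lemma~\ref{lemma:block_prod} inflates the pseudoinverse error by the normalization factor $\alpha_I$, forcing $\epsilon_p$ to scale like $1/\alpha_I$ and thereby lodging $\alpha_I$ inside the logarithm rather than outside. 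Securing this logarithmic (rather than polynomial) dependence on $\alpha_I$ is what makes the stated query complexity tight.
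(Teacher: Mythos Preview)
Your proposal is correct and follows essentially the same route as the paper: apply Lemma~\ref{lemma:block_inverse} (with $\eta$ replaced by the known lower bound $\Delta$) to block-encode $(H-E_0)^+$ with normalization $4/(3\Delta)$ and internal error $\epsilon'$, multiply by $U_I$ via Lemma~\ref{lemma:block_prod} to pick up an $\alpha_I\epsilon'$ error, and then set $\epsilon'=\epsilon/\alpha_I$. Your explicit remark that $\Delta$ may replace the unknown $\eta$ because the approximation region $[\Delta/\alpha_H,1]$ contains the normalized nonzero spectrum is a detail the paper leaves implicit.
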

\begin{proof}
The first step of the algorithm is to construct the block encoding approximating $(H-E_0)^+$. This is achieved by Lemma \ref{lemma:block_inverse}. That is, for $\epsilon'\in\left(0,\frac{1}{2}\right]$, we can construct a unitary block encodes $\alpha_{H'}^{-1} p_d(\alpha_H^{-1} (H-E_0))$ with 
\begin{equation}
    d=\mathcal{O}\left(\frac{\alpha_H}{\Delta} \log \left(\frac{1}{\Delta\epsilon'}\right)\right)
\end{equation}
obeying
\begin{equation}
    \norm{p_d(\alpha_H^{-1} (H-E_0))-(H-E_0)^+} \leq \epsilon'
\end{equation}
using $d$ queries to $U_H$ and its inverse. Here, $\alpha_{H'}=\frac{4}{3\Delta}$.

The next step is to take the product between the above block encoded matrix and $H_I$. By Lemma \ref{lemma:block_prod}, we can construct an $(\alpha_{H'}\alpha_I, m_1+m_2+1, \alpha_I\epsilon')$-block encoding of $G$ with one query to each of the block encoding. Then, let $\alpha_I\epsilon'\leq \epsilon$, and we obtain the claimed results.
\end{proof}
Here, we note that $U_Q$ is a block encoding of $Q$, but $Q$ also approximates $G$. Therefore, we will refer to $U_Q$ as the block encoding of $Q$ and $G$ interchangeably, but with the approximation error varying.

\subsection{Estimation of the fidelity susceptibility}
We have shown methods to construct the approximate block encoding of $G$. Our main algorithm is to apply the quantum amplitude estimation method for estimating the fidelity susceptibility. To this end, we introduce the QAE algorithm.
\begin{lemma}[Quantum amplitude estimation, Ref.~\cite{brassard2000quantum}]\label{lemma:ae}
Suppose $U$ is an unitary operation acting on register $a$ and $b$ such that
$$
U|0\rangle_{ab}=\sqrt{p}|0\rangle_a\left|\phi\right\rangle_b+\sqrt{1-p}|1\rangle_a\left|\phi'\right\rangle_b,
$$
where $\left|\phi\right\rangle$ and $\left|\phi'\right\rangle$ are pure quantum states and $p \in[0,1]$. There exists a quantum algorithm that outputs an estimation $\tilde{p} \in[0,1]$ such that
$$
|\tilde{p}-p| \leq \frac{2 \pi \sqrt{p(1-p)}}{K}+\frac{\pi^2}{K^2}
$$
with probability at least $\frac{8}{\pi^2}$, using $\mathcal{O}(K)$ queries to $U$ and $U^{\dagger}$.
\end{lemma}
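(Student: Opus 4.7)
The plan is to reduce the problem to quantum phase estimation (QPE) applied to a Grover-style amplitude-amplification operator, following the standard Brassard--H\o{}yer--Mosca--Tapp framework. First I would set up the two-dimensional geometry: writing $\sqrt{p} = \sin\theta$ for $\theta \in [0,\pi/2]$, the state $U|0\rangle_{ab}$ lies in the real plane spanned by $|\Psi_{\mathrm{good}}\rangle := |0\rangle_a|\phi\rangle_b$ and $|\Psi_{\mathrm{bad}}\rangle := |1\rangle_a|\phi'\rangle_b$, so the scalar we want to estimate is encoded in a single angle $\theta$.

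Next I would build the amplification operator $Q = -U\, S_0\, U^\dagger\, S_\chi$, where $S_0 = I - 2|0\rangle\!\langle 0|_{ab}$ reflects about the all-zero state and $S_\chi = (I - 2|0\rangle\!\langle 0|_a)\otimes I_b$ reflects about the ``good'' flag. A direct computation shows that $Q$ leaves the two-dimensional subspace invariant and acts on it as a rotation by $2\theta$; its restriction therefore has eigenvalues $e^{\pm 2i\theta}$ with orthonormal eigenvectors $|\psi_\pm\rangle$ satisfying
\begin{equation*}
U|0\rangle_{ab} \;=\; \tfrac{-i}{\sqrt{2}}\bigl(e^{i\theta}|\psi_+\rangle - e^{-i\theta}|\psi_-\rangle\bigr).
\end{equation*}

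Then I would run QPE on $Q$ with $U|0\rangle_{ab}$ on the state register and $\mathcal{O}(K)$ calls to controlled-$Q$, obtaining a phase estimate $\tilde\theta$. A standard Dirichlet/sinc-squared tail bound gives $|\tilde\theta - \theta| \leq \pi/K$ with probability at least $8/\pi^2$; the ambiguity between $|\psi_+\rangle$ and $|\psi_-\rangle$ is harmless because $\sin^2(\pm\theta)=\sin^2\theta$. Returning $\tilde p := \sin^2\tilde\theta$ and applying
\begin{equation*}
|\tilde p - p| = |\sin^2\tilde\theta - \sin^2\theta| \leq 2|\sin\theta\cos\theta|\,|\tilde\theta - \theta| + |\tilde\theta - \theta|^2,
\end{equation*}
together with $\sin\theta = \sqrt{p}$ and $\cos\theta = \sqrt{1-p}$, yields exactly the claimed bound $2\pi\sqrt{p(1-p)}/K + \pi^2/K^2$.

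The main obstacle is the QPE success-probability analysis in the presence of the two conjugate eigenvalues $\pm 2\theta$: one must argue that the two branches both yield estimates of the \emph{same} $\sin^2\theta$ so their probabilities add rather than split, and carefully use the sinc-squared concentration to guarantee the $8/\pi^2$ lower bound. The trigonometric sharpening of the final error estimate also requires keeping the $\pi/K$ QPE error tight and expanding $\sin^2$ to second order so that the cross term produces the $\sqrt{p(1-p)}$ prefactor rather than a worse $\mathcal{O}(1/K)$ constant.
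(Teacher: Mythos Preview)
The paper does not prove this lemma at all: it is simply quoted as a known result from Brassard--H\o{}yer--Mosca--Tapp (Ref.~\cite{brassard2000quantum}) and then invoked as a black box in the proof of Theorem~\ref{theorem:est_fid}. Your proposal is precisely the standard argument from that original reference---Grover-type rotation operator $Q$ with eigenvalues $e^{\pm 2i\theta}$, phase estimation with $\mathcal{O}(K)$ calls, the $8/\pi^2$ sinc-squared concentration bound, and the second-order expansion of $\sin^2$---so it is correct and is exactly what the citation points to; there is nothing further to compare.
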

In summary, given access to the unitary $U$ and its inverse, we can estimate the square of the amplitude flagged by a certain output of the ancilla register to an additive error $\varepsilon$ using $\mathcal{O}(1/\varepsilon)$ queries to $U$ and its inverse.

We have all the elements of our main algorithm in place, and we present the results as follows.
\begin{theorem}[Estimation of the fidelity susceptibility, formal version of Theorem \ref{theorem:main}]\label{theorem:est_fid}
Let $\epsilon\in\left(0,\frac{1}{2}\right]$. Let $\chi_F=\norm{G\ket{\Psi_0}}_2^2=\norm{\left(H-E_0\right)^{+}H_I\ket{\Psi_0}}_2^2$.
There exists an algorithm outputs with probability at least $\frac{8}{\pi^2}$ an estimation ${\chi}'_F$ such that 
$$|\chi'_F-\chi_F|\leq \epsilon$$
using 
\begin{equation}
    \mathcal{O}\left(\frac{\alpha_H\alpha_I^2}{\Delta^3 \epsilon} \log \left(\frac{\alpha_I}{\Delta\epsilon}\right)\right)
\end{equation}
queries to $U_H$ and its inverse, and 
\begin{equation}
    \mathcal{O}\left(\frac{\alpha_I^2}{\Delta^2\epsilon} \right)
\end{equation}
queries to $U_I$ and its inverse.
\end{theorem}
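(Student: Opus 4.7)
The plan is to combine the block encoding constructed in Theorem~\ref{theorem:be_g} with quantum amplitude estimation from Lemma~\ref{lemma:ae}, observing that $\chi_F = \norm{G\ket{\Psi_0}}_2^2$ is exactly a squared amplitude when $\ket{\Psi_0}$ is fed into a block encoding of $G$. Concretely, I would first instantiate $U_Q$ as an $(\alpha_Q, m_1+m_2+1, \epsilon_{\mathrm{BE}})$-block encoding of $G$, with $\alpha_Q = 4\alpha_I/(3\Delta)$ and an internal block encoding error $\epsilon_{\mathrm{BE}}$ to be fixed later. Applying $U_Q$ to $\ket{0^{m_1+m_2+1}}\otimes\ket{\Psi_0}$ produces a state of the form $\alpha_Q^{-1}\ket{0^{m_1+m_2+1}}\otimes Q\ket{\Psi_0} + \ket{\perp}$, so the probability of measuring all-zeros on the ancilla register is $p = \alpha_Q^{-2}\norm{Q\ket{\Psi_0}}_2^2$, with $Q$ approximating $G$ in operator norm to within $\epsilon_{\mathrm{BE}}$.

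Second, I would run QAE on the composite unitary that first prepares $\ket{\Psi_0}$ from $\ket{0^n}$ via the assumed state-preparation oracle and then applies $U_Q$, treating the all-zeros block-encoding register as the single-qubit flag by compressing it with a multi-controlled gate. With $K = \mathcal{O}(1/\varepsilon_{\mathrm{QAE}})$ queries to this unitary and its inverse, Lemma~\ref{lemma:ae} returns $\tilde p$ with $|\tilde p - p| \leq \varepsilon_{\mathrm{QAE}}$ with probability at least $8/\pi^2$; standard median-of-means boosting raises the confidence at only constant overhead. The final estimate is then $\chi_F' := \alpha_Q^2 \tilde p$.

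The error decomposes as $|\chi_F' - \chi_F| \leq \alpha_Q^2 \varepsilon_{\mathrm{QAE}} + \bigl|\norm{Q\ket{\Psi_0}}_2^2 - \norm{G\ket{\Psi_0}}_2^2\bigr|$. For the second term I would use the factorization $|a^2-b^2| = |a-b|(a+b)$ together with $\bigl|\norm{Q\ket{\Psi_0}}_2 - \norm{G\ket{\Psi_0}}_2\bigr| \leq \norm{(Q-G)\ket{\Psi_0}}_2 \leq \epsilon_{\mathrm{BE}}$ and the spectral bound $\norm{G\ket{\Psi_0}}_2 \leq \norm{H_I}/\Delta \leq \alpha_I/\Delta$, which holds because the pseudoinverse inverts only eigenvalues $\geq \Delta$ on the excited subspace. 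Balancing both contributions with $\epsilon_{\mathrm{BE}} = \Theta(\Delta\epsilon/\alpha_I)$ and $\varepsilon_{\mathrm{QAE}} = \Theta(\Delta^2\epsilon/\alpha_I^2)$ yields total error $\epsilon$.

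Finally, the query counts follow by multiplying the QAE query count $\mathcal{O}(1/\varepsilon_{\mathrm{QAE}}) = \mathcal{O}(\alpha_I^2/(\Delta^2\epsilon))$ by the per-invocation cost of $U_Q$ from Theorem~\ref{theorem:be_g}, namely $\mathcal{O}((\alpha_H/\Delta)\log(\alpha_I/(\Delta\epsilon_{\mathrm{BE}})))$ queries to $U_H$ and one query to $U_I$. Substituting $\epsilon_{\mathrm{BE}} \sim \Delta\epsilon/\alpha_I$ and absorbing the logarithm into $\mathcal{O}(\log(\alpha_I/(\Delta\epsilon)))$ recovers the stated complexities. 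The main technical obstacle I anticipate is that the amplification factor $\alpha_Q^2 \sim \alpha_I^2/\Delta^2$ tightens the required QAE precision --- this is precisely what drives the $1/\Delta^3$ gap dependence --- while $\epsilon_{\mathrm{BE}}$ must simultaneously be small enough that the deviation $Q-G$ does not dominate; the gap-based upper bound on $\norm{G\ket{\Psi_0}}_2$ is what keeps the accounting tight rather than wasteful.
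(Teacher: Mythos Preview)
Your proposal is correct and follows essentially the same route as the paper's own proof: build the block encoding $U_Q$ of $G$ via Theorem~\ref{theorem:be_g}, read off $\chi_F$ as the (rescaled) success probability of the ancilla projector, estimate that probability with Lemma~\ref{lemma:ae}, and split the total error into the QAE contribution $\alpha_Q^2\varepsilon_{\mathrm{QAE}}$ plus the block-encoding contribution bounded via $|a^2-b^2|=(a+b)|a-b|$. The only cosmetic difference is that the paper bounds the factor $(a+b)$ directly by $2\alpha_Q$ rather than by $\norm{G\ket{\Psi_0}}_2\le\alpha_I/\Delta$, which is the same quantity up to a constant; your parameter choices $\epsilon_{\mathrm{BE}}=\Theta(\Delta\epsilon/\alpha_I)$ and $\varepsilon_{\mathrm{QAE}}=\Theta(\Delta^2\epsilon/\alpha_I^2)$ coincide with the paper's $\epsilon_1\le\epsilon/(4\alpha_Q)$ and $\epsilon_2\le\epsilon/(2\alpha_Q^2)$.
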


\begin{proof}
The input state to our algorithm is the ground state $\ket{\Psi_0}$ of $H$ along with the ancilla register set to the all-zero state. Our algorithm proceeds by constructing the approximate block encoding of $G$. Invoking Theorem \ref{theorem:be_g}, for error $\epsilon_1$, we can construct an $(\alpha_{H'}\alpha_I, m_1+m_2+1, \epsilon_1)$-block encoding of $G$ with
\begin{equation}
\mathcal{O}\left(\frac{\alpha_H}{\Delta} \log \left(\frac{\alpha_I}{\Delta\epsilon_1}\right)\right)
\end{equation}
queries to $U_H$, its inverse and one query to $U_I$.

The next step is to apply QAE to estimate the fidelity susceptibility. That is, we perform the block encoding of $Q$ on $\ket{0^{m_1+m_2+1}}\ket{\Psi_0}$ and then estimate the probability of successfully implementing the block encoding flagged by the output of the ancilla register to be in state $\ket{0^{m_1+m_2+1}}$. As the block encoded operator $Q$ approximates $G$ is normalized, we estimate the normalized quantity to an accuracy $\epsilon_2$ and then multiply the normalization factor. Denote the direct output of the QAE algorithm as $E$. Then, we have
\begin{equation}
    \left| E-\norm{\alpha_Q^{-1}Q\ket{\Psi_0}}_2^2 \right|\leq \epsilon_2,
\end{equation}
where $\norm{\alpha_Q^{-1}Q\ket{\Psi_0}}_2^2$ is the success probability for implementing the block encoding of $Q$. As stated, the final estimation is given by $\alpha_QE$, we require
\begin{equation}
\begin{aligned}
    |\alpha_Q^2E-\chi_F|&\leq \epsilon,
\end{aligned}
\end{equation}
where $\chi'_F:=\alpha_Q^2E$ is the final estimation of the fidelity susceptibility.
By inserting the success probability and applying the triangle inequality, we have a sufficient condition to be
\begin{equation}
\begin{aligned}
    \left|\alpha_Q^2 E-\alpha_Q\norm{\alpha_Q^{-1}Q\ket{\Psi_0}}_2^2+\alpha_Q^2\norm{\alpha_Q^{-1}Q\ket{\Psi_0}}_2^2-\chi_F\right| &\leq \epsilon\\
    \left|\alpha_Q^2 E-\alpha_Q^2\norm{\alpha_Q^{-1}Q\ket{\Psi_0}}_2^2\right| +\left|\alpha_Q^2\norm{\alpha_Q^{-1}Q\ket{\Psi_0}}_2^2-\chi_F\right| &\leq \epsilon
\end{aligned}.
\end{equation}
It suffices to prescribe the two sources of error to each contribute $\frac{\epsilon}{2}$. This dictates $\alpha_Q^2\epsilon_2\leq \frac{\epsilon} {2}$, which leads to $\epsilon_2\leq \frac{\epsilon}{2\alpha_Q^2}$. Besides, because $\norm{Q-G}\leq \epsilon_1$, we obtain 
\begin{equation}
\begin{aligned}
    \left| \norm{Q\ket{\Psi_0}}_2^2-\norm{G\ket{\Psi_0}}_2^2\right|&=(\norm{Q\ket{\Psi_0}}_2+\norm{G\ket{\Psi_0}}_2)\left| \norm{Q\ket{\Psi_0}}_2-\norm{G\ket{\Psi_0}}_2\right|\\
    &\leq 2\alpha_Q\norm{Q-G} \leq 2\alpha_Q\epsilon_1,
\end{aligned}
\end{equation}
where we have used $\norm{A\ket{x}}_2\leq \norm{A}$ and $\alpha_Q$ is an upper bound on both $\norm{Q}$ and $\norm{G}$. Let $2\alpha_Q\epsilon_1\leq\frac{\epsilon}{2}$, giving $\epsilon_1\leq \frac{\epsilon}{4\alpha_Q}$. 

Plugging in the $\epsilon_1$ and $\epsilon_2$ to Theorem \ref{theorem:be_g} and Lemma \ref{lemma:ae} yields the claimed results. 
Besides, we remark that the QAE algorithm repeatedly makes queries to $U_Q$, which explains the final results. This finishes the proof.
\end{proof}

\subsection{Quadratic speedup for special types of Hamiltonians}
The matrix inversion is a crucial step in the algorithm presented in the last section. Here, because the matrix we considered $(H-E_0)$ is positive semi-definite, it is found that in certain cases we may improve the dependence on the condition number from linear to square root, indicating a quadratic speedup. It should be remarked that even for general positive definite matrices, a $\Omega(\kappa)$ lower bound is proven in Ref.~\cite{orsucci2021solving}. For Hamiltonians, the frustration-free (FF) Hamiltonian is one of the subclasses of shows such a quantum speedup. 

A FF Hamiltonian is defined as $H_F=\sum_{j=0}^{r-1} \Pi_j$ for $\Pi_j$ are projectors, and $\cap_i \Pi\neq \emptyset$, so that the ground state energy $E_0=0$. In this work, we consider the ground state to be nondegenerate. Intriguingly, several different features are quadratically improvable for the FF Hamiltonian compared to general systems, such as the spectral gap amplification (SGA)~\cite{somma2013spectral} and correlation length quadratically improvement~\cite{gosset2016correlation}.

We give an intuition of why such a quadratic speedup may only present to the FF systems. From a constructive perspective, it is possible to construct a $(1,m,0)$-block encoding of $H'=(I-\alpha_F^{-1} H_F)$ when $H$ is frustration-free with $\alpha_F^{-1}\geq \norm{H_F}$ as a normalization factor. It is crucial to have the normalization factor to be one, so that the spectral radius of the block encoded matrix is $1$, corresponding to the ground state energy. Besides, the first excited state component is mapped to eigenvalue $1-\alpha_F^{-1}E_1$. Now, consider that we want to perform a matrix (analytic) function $f(x)$ of $H'$. If $f(x)$ has its largest derivative at $x=1$, then the cost of approximating such a function is quadratically improved over those that do not. This is seen by approximation theory~\cite{sachdeva2014faster}, such that there exist polynomial functions that oscillate more rapidly at $x=\pm 1$, resulting in quadratic improvements in the truncation degree. The behavior is also justified by Bernstein's inequality~\cite{schaeffer1941inequalities}, which lower bounds the derivative of polynomial functions: i) $\Omega(d)$ for $x$ is $\mathcal{O}(1)$ away from $\pm 1$; and ii) $\Omega(d^2)$ for $x=\pm1$ for degree $d$ polynomial functions. Such a behavior is saturated by the Chebyshev polynomial of the first kind~\cite{trefethen2019approximation}.

The two ingredients to achieve the speedup are i) the block encoded matrix has spectral radius one; and ii) the matrix function of the block encoded matrix has the largest derivative at $x=\pm1$. The FF Hamiltonians meet the former requirement, and we will constructively show how to realize the claimed block encoding. The second requirement is seen by previous studies, such as the preparation ground state of FF systems~\cite{thibodeau2023nearly}, where we need to realize a filter function to filter out other excited-state contributions. The approximation using polynomials of degree $\mathcal{O}(\Delta^{-0.5})$, a quadratic enhancement over the general case. The speed up is gradually lost when the block encoded matrix has the mapped ground state energy away from $1$. This also explains the Szegedy walk~\cite{szegedy2004quantum}, which quadratically speeds up the classical Markov chain as seen by Ref.~\cite{apers2018quantum}. Overall, the function there is a power function $x^k,~k\in\mathbb{Z}^+$, which satisfies our second criterion, such that a degree $\mathcal{O}(\sqrt{k\log(1/\epsilon)})$ polynomial suffices for an $\epsilon$-approximation~\cite{sachdeva2014faster,gilyen2019quantum}.

We follow the idea of Ref.~\cite{king2025quantum}. We first construct a spectral-amplified operator, which is defined as 
\begin{equation}
    H_{\mathrm{SA}}:=\sum_{j=0}^{r-1}|j\rangle_{\mathrm{a}} \otimes \Pi_j,
\end{equation}
where $a$ denotes the ancillary register. The following lemma gives the block encoding methods of $H_{\mathrm{SA}}$.
\begin{lemma}[{\cite[Lemma 3]{king2025quantum}}]
Given access to the block encoding $U_i$ of $\Pi_j$ and its inverse, there exists a quantum circuit that constructs a block encoding unitary of $H_{\mathrm{SA}}$ with normalization factor $\sqrt{r}$ using $\mathcal{O}(r)$ queries to each $U_i$ and $\mathcal{O}(r)$.
\end{lemma}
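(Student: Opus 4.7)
The plan is to realize $H_{\mathrm{SA}}$ as the output of a standard prepare-select-unprepare LCU construction, treating the ancilla register $a$ not as a block-encoding flag (to be uncomputed) but as part of the \emph{output} space of the rectangular operator $H_{\mathrm{SA}} : \mathcal{H} \to \mathcal{H}_a \otimes \mathcal{H}$. The normalization $\sqrt{r}$ is precisely what a uniform-superposition preparation on $\log_2 r$ qubits yields, and it matches the worst-case spectral bound $\|H_{\mathrm{SA}}\|^2 = \|H_F\| \leq r$ that follows from $H_{\mathrm{SA}}^\dagger H_{\mathrm{SA}} = \sum_j \Pi_j^2 = \sum_j \Pi_j = H_F$.

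Concretely, I introduce three registers: the selector register $a$ of size $\lceil \log_2 r\rceil$, the block-encoding ancilla register $b$ associated with the $U_j$'s, and the system register. First I apply a PREPARE unitary $P$ on $a$ that maps $|0\rangle_a \mapsto \frac{1}{\sqrt{r}} \sum_{j=0}^{r-1} |j\rangle_a$; this is just a Hadamard layer (with a standard small padding when $r$ is not a power of two). Second I implement the SELECT operation
\begin{equation}
\mathrm{SELECT} \;=\; \sum_{j=0}^{r-1} |j\rangle\langle j|_a \otimes U_j,
\end{equation}
which is the multiplexor that, conditioned on $|j\rangle_a$, applies the block encoding $U_j$ to $(b,\text{sys})$. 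Implementing SELECT in the textbook way costs one controlled application of each $U_j$, i.e.\ $\mathcal{O}(1)$ queries to each $U_j$ (and $\mathcal{O}(r)$ queries and ancillary gates in total).

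Taking $V := \mathrm{SELECT}\cdot(P \otimes I_{b,\text{sys}})$ and applying it to $|0\rangle_a |0\rangle_b |\psi\rangle_{\text{sys}}$ produces, by definition of a block encoding of each $\Pi_j$,
\begin{equation}
V |0\rangle_a |0\rangle_b |\psi\rangle \;=\; \frac{1}{\sqrt{r}} \sum_{j} |j\rangle_a \bigl(|0\rangle_b \Pi_j |\psi\rangle + |\perp_j\rangle_{b,\text{sys}}\bigr),
\end{equation}
where each $|\perp_j\rangle$ is orthogonal to $|0\rangle_b \otimes \mathcal{H}_{\text{sys}}$. Projecting the $b$ register onto $|0\rangle_b$ therefore extracts exactly $\frac{1}{\sqrt{r}} \sum_j |j\rangle_a \Pi_j |\psi\rangle = \frac{1}{\sqrt{r}} H_{\mathrm{SA}} |\psi\rangle$, so $V$ is an exact $(\sqrt{r}, m_b, 0)$-block encoding of $H_{\mathrm{SA}}$ in the natural sense adapted to the rectangular operator (the $a$-register indexing the output rather than being flagged as $|0\rangle$). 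An inverse $V^\dagger$ is likewise obtained from $P^\dagger$ and the inverses $U_j^\dagger$, so access to the inverses of the $U_j$ carries over.

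The conceptually trickiest point (rather than a technical obstacle) is the bookkeeping: one has to resist the temptation to treat $a$ as a to-be-postselected ancilla, because doing so would force a spurious dependence on the inner products $\langle 0|j\rangle$. The correct viewpoint is that $a$ participates in the codomain of $H_{\mathrm{SA}}$, so only the $b$-register is the block-encoding flag; once this is set up, verification of the block-encoding identity and the query count is immediate, and the $\sqrt{r}$ normalization arises transparently from the uniform superposition on $a$.
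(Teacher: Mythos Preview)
Your proof is correct and follows essentially the same LCU construction as the paper: a uniform PREPARE on the selector register followed by a multiplexed SELECT over the $U_j$, with the key observation that only the internal ancilla of the $U_j$'s serves as the block-encoding flag while the selector register remains part of the output of $H_{\mathrm{SA}}$. Your exposition is more explicit about this rectangular-operator bookkeeping than the paper's terse verification, and your query count of $\mathcal{O}(1)$ calls to each $U_j$ (hence $\mathcal{O}(r)$ total) is in fact sharper than the statement's phrasing.
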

\begin{proof}
The construction applies the LCU method and defines: 
\begin{equation}
\begin{aligned}
    \mathrm{ SELECT }:=&\sum_{j=0}^{r-1}|j\rangle\left\langle\left. j\right|_{\mathrm{b}} \otimes\right. U_j,\\
    \mathrm{PREPARE}|0\rangle_{\mathrm{b}} &\mapsto|\alpha\rangle_{\mathrm{b}}:=\frac{1}{\sqrt{r}} \sum_{j=0}^{r-1} |j\rangle_{\mathrm{b}} .
\end{aligned}
\end{equation}
It is readily verified that $\mathrm{ SELECT \cdot PREPARA}$ is the block encoding of $H_{\mathrm{SA}}$:
\begin{equation}
    \left\langle\left. 0\right|_{\mathrm{a}} \mathrm { SELECT } \cdot \mathrm { PREPARE } \mid 0\right\rangle_{\mathrm{b}}|0\rangle_{\mathrm{a}}=\sum_{j=0}^{r-1}|j\rangle_{\mathrm{b}} \otimes \frac{\Pi_j}{\sqrt{r}}.
\end{equation}
\end{proof}

In some sense, $H_{\mathrm{SA}}$ can be seen as the square root of $H_F$, since $H_{\mathrm{SA}}^\dagger H_{\mathrm{SA}}=H_F$. Now, the following lemma gives the final block encoding.
\begin{lemma}[{\cite[Lemma 4]{king2025quantum}}]
Let $U_{\mathrm{SA}}$ be the block encoding of $H_{\mathrm{SA}}$ with normalization factor $\sqrt{r}$. Then, $U_{\mathrm{SA}}^\dagger \left(\mathrm{REF}_{\mathrm{a}} \otimes I\right)U_{\mathrm{SA}}$ is the block encoding unitary of $I-\frac{H_F}{\frac{1}{2}r}$ of normalization factor $1$ and $\mathrm{REF}_{\mathrm{a}}=I_{\mathrm{a}}-2\ket{0}\bra{0}_{\mathrm{a}}$.
\end{lemma}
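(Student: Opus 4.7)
The plan is to verify the block-encoding relation by a direct calculation and then check the normalization bound. I would begin by expanding, using $\mathrm{REF}_{\mathrm{a}} = I - 2|0\rangle\langle 0|_{\mathrm{a}}$ and the unitarity of $U_{\mathrm{SA}}$,
\[
V := U_{\mathrm{SA}}^\dagger (\mathrm{REF}_{\mathrm{a}} \otimes I) U_{\mathrm{SA}} = I - 2\, U_{\mathrm{SA}}^\dagger \bigl(|0\rangle\langle 0|_{\mathrm{a}} \otimes I\bigr) U_{\mathrm{SA}},
\]
and then computing the matrix element $(\langle 0|_{\mathrm{a}} \langle 0|_{\mathrm{b}} \otimes I_s)\, V\, (|0\rangle_{\mathrm{a}} |0\rangle_{\mathrm{b}} \otimes I_s)$, where $s$ denotes the physical register and $(\mathrm{a},\mathrm{b})$ is the combined block-encoding ancilla (the ancilla for the $U_j$'s and the PREPARE register, respectively).

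Next, I would introduce the intermediate map $K := (|0\rangle\langle 0|_{\mathrm{a}} \otimes I)\, U_{\mathrm{SA}}\, (|0\rangle_{\mathrm{a}} |0\rangle_{\mathrm{b}} \otimes I_s)$. By the hypothesis that $U_{\mathrm{SA}}$ is an $(\sqrt{r},\cdot,0)$-block encoding of $H_{\mathrm{SA}}$, we have $K|\psi\rangle_s = |0\rangle_{\mathrm{a}} \otimes H_{\mathrm{SA}}|\psi\rangle_s/\sqrt{r}$, so inserting $|0\rangle\langle 0|_{\mathrm{a}} = (|0\rangle\langle 0|_{\mathrm{a}})^2$ into the right-hand side of the expansion and contracting factors identifies the second term as $(2/r)\, H_{\mathrm{SA}}^\dagger H_{\mathrm{SA}}$. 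The key algebraic step is then the identity $H_{\mathrm{SA}}^\dagger H_{\mathrm{SA}} = \sum_{j,k}\langle j|k\rangle\, \Pi_j\Pi_k = \sum_j \Pi_j^2 = \sum_j \Pi_j = H_F$, where $\Pi_j^2 = \Pi_j$ because each $\Pi_j$ is a projector, as noted in the text. This yields the projected matrix element $I_s - 2H_F/r = I_s - H_F/(r/2)$, precisely the claimed encoded operator.

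Finally, I would verify that $1$ is a valid normalization factor. Since $H_F = \sum_{j=0}^{r-1} \Pi_j$ is a sum of $r$ projectors, it is positive semidefinite with $\|H_F\| \leq r$; hence $I - 2H_F/r$ has spectrum in $[-1,1]$ and operator norm at most $1$. Unitarity of $V$ is immediate from that of $U_{\mathrm{SA}}$ and of the reflection $\mathrm{REF}_{\mathrm{a}} \otimes I$. The main obstacle is purely notational bookkeeping: one must keep straight that the outer block encoding projects \emph{both} ancilla registers $\mathrm{a}$ and $\mathrm{b}$ to $|0\rangle$, while the inner reflection acts only on $\mathrm{a}$, and correctly identify the contraction as $K^\dagger K$ rather than $K K^\dagger$. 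Once this is organized, the proof reduces to the identity $H_{\mathrm{SA}}^\dagger H_{\mathrm{SA}} = H_F$ that motivated the spectral-amplification construction in the first place.
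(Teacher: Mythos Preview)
Your proof is correct. The paper does not supply its own proof of this lemma; it is simply quoted from \cite[Lemma~4]{king2025quantum}. Your direct computation---expanding $V=I-2\,U_{\mathrm{SA}}^\dagger(|0\rangle\langle 0|_{\mathrm{a}}\otimes I)U_{\mathrm{SA}}$, projecting both ancilla registers onto $|0\rangle$, and reducing the cross term to $H_{\mathrm{SA}}^\dagger H_{\mathrm{SA}}/r=H_F/r$ via $\Pi_j^2=\Pi_j$---is exactly the standard verification of this reflection trick, and your bookkeeping of which registers are projected where is right. The normalization check $\|H_F\|\le r\Rightarrow\|I-2H_F/r\|\le 1$ is also fine.
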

We find the eigenvalues of the block encoding operator are mapped to $1-2E_i/r$. The spectral gap of the operator is then $2\Delta/r$. Given the above block encoding, the function we need to construct for our targeting $H_F^+$ is
\begin{equation}
    g(x)=\frac{1}{1-x},~x\leq (1-2\Delta/r);~g(x)=0,~x=1.
\end{equation}
Here, we also consider the Moore-Penrose inverse such that when the input approaches $1$, the output becomes zero. When inputting the block encoding operator to the function $g(x)$, we have
\begin{equation}
    \frac{I}{I-\left(I-\frac{P^\perp H_F P^\perp}{\frac{1}{2}r}\right)}=(r/2)^{-1} H_F^+,
\end{equation}
where $P^\perp=I-\ket{\Psi_0}\bra{\Psi_0}$.

The quadratic improved results are provided in the following, and we refer the readers to Ref.~\cite{orsucci2021solving} for a detailed construction.
\begin{lemma}[Adapted from {\cite[Proposition 12]{orsucci2021solving}}]\label{lemma:inverse_ff}
Let $U_F$ be the unitary, which is a $(1,b,0)$-block encoding of $I-2\frac{H_F}{r}$. Then, there exists a quantum circuit that implements the $(K,b+2,\epsilon)$-block encoding of $H_F^+$ with $K\in\Theta(\Delta^{-1})$ using 
\begin{equation}
    \mathcal{O}\left(\sqrt{\frac{r}{\Delta}}\log\left(\frac{r}{\Delta\epsilon}\right) \right)
\end{equation}
queries to $U_F$ and its inverse.
\end{lemma}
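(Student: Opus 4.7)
The plan is to reduce the claim to a polynomial approximation of the scalar function $g(x) = 1/(1-x)$ (with the Moore--Penrose convention $g(1)=0$) on the spectrum of $I - 2H_F/r$, and then invoke QSVT (Lemma \ref{lemma:qsvt}) on the given block encoding $U_F$. The ground space of $H_F$ is mapped to eigenvalue $1$ in $I - 2H_F/r$, while the excited subspace lies in $[-1, 1-2\Delta/r]$; so any faithful approximation of $g$ on the latter interval that simultaneously vanishes at $x=1$ singular-value-transforms $U_F$ into a block encoding of a scalar multiple of $H_F^+$.

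The critical technical step is to construct a polynomial $p(x)$, bounded by $1$ in absolute value on $[-1,1]$, such that $\lvert p(x) - (2\Delta/r)\,g(x) \rvert \leq \varepsilon$ uniformly on $[-1, 1-2\Delta/r]$ and $\lvert p(1) \rvert \leq \varepsilon$, with degree $d = \mathcal{O}(\sqrt{r/\Delta}\,\log(r/(\Delta\epsilon)))$. The rescaling by $2\Delta/r$ is forced because $\lVert g \rVert_\infty$ on the relevant interval is $r/(2\Delta)$, and QSVT requires the image polynomial to sit inside $[-1,1]$. The square-root dependence on $r/\Delta$, which is the source of the quadratic speedup, comes from the fact that the singularity of $g$ sits exactly at the spectral edge $x=1$: by Bernstein's inequality, polynomials admit much larger derivatives near $\pm 1$, and Chebyshev polynomials of the first kind saturate this bound. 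The concrete construction combines a shifted/scaled Chebyshev expansion of $1/(1-x)$ with a smooth cutoff that suppresses the value at $1$, and is precisely the polynomial built in Proposition 12 of \cite{orsucci2021solving}, which I would invoke as a black box.

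With this polynomial, QSVT applied to $U_F$ yields a $(1, b+1, 0)$-block encoding of $p(I - 2H_F/r)$ using $d$ queries to $U_F$ and $U_F^\dagger$. Because $g$ has no definite parity, $p$ must be split into its even and odd parts, each handled separately by QSVT and recombined via a single-qubit LCU; this accounts for the $b+2$ in the statement. Since the scalar bound $\lvert p - (2\Delta/r)g \rvert \leq \varepsilon$ lifts directly to an operator-norm bound on the normal operator $I - 2H_F/r$, the resulting block-encoded operator is $\varepsilon$-close to $(2\Delta/r)\cdot(r/2)\,H_F^+ = \Delta\,H_F^+$ on the excited subspace and $\varepsilon$-close to zero on the ground space. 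Declaring the normalization factor $K = 1/\Delta \in \Theta(\Delta^{-1})$ therefore produces a $(K, b+2, \varepsilon/\Delta)$-block encoding of $H_F^+$, and setting $\varepsilon = \Theta(\Delta\epsilon)$ yields the advertised error $\epsilon$ together with the query count $\mathcal{O}(\sqrt{r/\Delta}\,\log(r/(\Delta\epsilon)))$.

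The main obstacle is the polynomial approximation itself: a naive Chebyshev or Taylor expansion of $1/(1-x)$ yields degree $\mathcal{O}(r/\Delta)$, which would recover only the generic scaling of Lemma \ref{lemma:block_inverse} and forfeit the quadratic speedup. Achieving the $\sqrt{r/\Delta}$ scaling requires carefully exploiting the spectral-edge behavior of $g$ and is nontrivial to prove from scratch without either borrowing the construction of \cite{orsucci2021solving} or redoing their Chebyshev analysis in detail. Everything downstream -- applying QSVT, combining the even and odd parts by LCU, and propagating the approximation error through the normalization $K$ -- is then routine bookkeeping.
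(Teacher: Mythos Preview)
Your proposal is correct and follows essentially the same approach as the paper: both proofs reduce the claim directly to Proposition~12 of \cite{orsucci2021solving}, with the identification $\kappa = r/\Delta$ for the block-encoded operator $I - 2H_F/r$. You unpack more of the internals (the target function $g(x)=1/(1-x)$, the $2\Delta/r$ rescaling, the even/odd split accounting for the extra ancilla, and the error propagation through $K=\Theta(\Delta^{-1})$), whereas the paper simply matches parameters and cites the proposition; but the route is the same.
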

\begin{proof}
In \cite[Proposition 12]{orsucci2021solving}, it is required that the block encoded matrix to be $I-\eta A$ with $\eta\in(0,1]$ and $A$ has a spectrum within $[\frac{1}{\kappa},2]$. We take $A$ to be ${H_F}/r$, and when omitting the contribution of the ground state, this determines $\kappa=r/\Delta$. Thus, $\eta=2$. Plugging in the two factors gives the claimed results.
\end{proof}

We note that because $r\geq \norm{H_F}$, the primary improvement from Lemma \ref{lemma:inverse_ff} from \ref{lemma:block_inverse} is that the linear dependence on the upper bound on the operator norm and the inverse of the smallest non-zero eigenvalue is quadratically enhanced. Thus, for estimating the fidelity susceptibility of a frustration-free Hamiltonian, we have the following results.
\begin{theorem}[Formal version of Theorem \ref{theorem:ff}]
Let $\epsilon\in\left(0,\frac{1}{2}\right]$. Let $\chi_F=\norm{G\ket{\Psi_0}}_2^2=\norm{\left(H_F-E_0\right)^{+}H_I\ket{\Psi_0}}_2^2$ with $H_F$ a frustration free Hamiltonian.
There exists an algorithm outputs with probability at least $\frac{8}{\pi^2}$ an $\epsilon$-estimation to $\chi_F$ using 
\begin{equation}
    \mathcal{O}\left(\frac{\sqrt{r}\alpha_I^2}{\Delta^{2.5} \epsilon} \log \left(\frac{r\alpha_I}{\Delta\epsilon}\right)\right)
\end{equation}
queries to $U_F$ and its inverse, and 
\begin{equation}
    \mathcal{O}\left(\frac{\alpha_I^2}{\Delta^2\epsilon} \right)
\end{equation}
queries to $U_I$ and its inverse.
\end{theorem}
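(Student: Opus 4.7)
The plan is to reuse the skeleton of the proof of Theorem~\ref{theorem:est_fid} almost verbatim, replacing only the QSVT-based inversion step (Lemma~\ref{lemma:block_inverse}) with the frustration-free specialized pseudoinverse of Lemma~\ref{lemma:inverse_ff}. Everything else, including the QAE outer loop, the multiplication by $U_I$, and the triangle-inequality budget for errors, carries over unchanged. The gain in complexity comes entirely from the quadratic improvement in the matrix-inversion subroutine.

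First I would take the input block encoding $U_F$ to be the $(1,b,0)$-block encoding of $I - 2H_F/r$ that follows from the spectral-amplification construction recalled just before the lemma, so that the ground state is mapped to eigenvalue $1$ and the rest of the spectrum lies in $[-1, 1 - 2\Delta/r]$. Applying Lemma~\ref{lemma:inverse_ff} with target accuracy $\epsilon_1$ then produces a $(K, b+2, \epsilon_1)$-block encoding of $H_F^+$ with $K = \Theta(1/\Delta)$, at a cost of $\mathcal{O}(\sqrt{r/\Delta}\log(r/(\Delta\epsilon_1)))$ queries to $U_F$ and $U_F^\dagger$. Composing this with $U_I$ via Lemma~\ref{lemma:block_prod} yields a $(\alpha_Q, b+2+m_2, \alpha_I\epsilon_1)$-block encoding $U_Q$ of $G = H_F^+ H_I$ with normalization factor $\alpha_Q = K\alpha_I = \Theta(\alpha_I/\Delta)$, using one additional query to $U_I$ per $U_Q$ call.

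Next I would run the QAE routine of Lemma~\ref{lemma:ae} on $U_Q$ acting on $\ket{0^{b+2+m_2}}\ket{\Psi_0}$, estimating the success probability $\|\alpha_Q^{-1}Q\ket{\Psi_0}\|_2^2$ to accuracy $\epsilon_2$ and then rescaling by $\alpha_Q^2$ to get $\chi'_F$. The error analysis is identical to Theorem~\ref{theorem:est_fid}: the triangle inequality splits $|\chi'_F - \chi_F| \le \epsilon$ into $\alpha_Q^2\epsilon_2 \le \epsilon/2$ and $2\alpha_Q\epsilon_1 \le \epsilon/2$ (using $\big|\|Q\ket{\Psi_0}\|_2^2 - \|G\ket{\Psi_0}\|_2^2\big|\le 2\alpha_Q\|Q-G\|$). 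These give $\epsilon_2 = \Theta(\Delta^2\epsilon/\alpha_I^2)$ and $\epsilon_1 = \Theta(\Delta\epsilon/\alpha_I)$. The QAE layer uses $\mathcal{O}(1/\epsilon_2) = \mathcal{O}(\alpha_I^2/(\Delta^2\epsilon))$ calls to $U_Q$, which immediately accounts for the claimed $U_I$-query count, and multiplying by the per-$U_Q$ cost $\mathcal{O}(\sqrt{r/\Delta}\log(r\alpha_I/(\Delta\epsilon)))$ of the inversion step gives the claimed $\widetilde{\mathcal{O}}(\sqrt{r}\alpha_I^2/(\Delta^{2.5}\epsilon))$ total for $U_F$.

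The main obstacle I anticipate is checking that Lemma~\ref{lemma:inverse_ff} really delivers a \emph{pseudoinverse} rather than a genuine inverse: the ground state sits exactly at the spectral edge $x=1$ of the block-encoded $I-2H_F/r$, where a naive $1/(1-x)$ diverges. The construction inherited from Ref.~\cite{orsucci2021solving} is precisely designed around a polynomial that interpolates the target $g(x) = 1/(1-x)$ on the excited sector $x \le 1 - 2\Delta/r$ while being forced to vanish at $x=1$, so that $P^\perp$ projection is built into the polynomial and the ground state contributes zero. I would verify this edge behavior explicitly and confirm that the resulting approximation bound $\|Q - G\|\le \epsilon_1$ holds on the whole space (not just on the excited subspace); once that is in hand, the rest of the argument is mechanical.
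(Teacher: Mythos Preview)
Your proposal is correct and follows exactly the paper's approach: the paper's own proof is a single sentence stating that one replaces the matrix-inverse construction by Lemma~\ref{lemma:inverse_ff} and reuses the rest of the proof of Theorem~\ref{theorem:est_fid} verbatim, which is precisely the skeleton you describe (QSVT inversion swapped for the frustration-free version, product with $U_I$, QAE outer loop, same triangle-inequality error budget). Your additional discussion of the pseudoinverse behavior at the spectral edge $x=1$ is a valid clarification of the construction behind Lemma~\ref{lemma:inverse_ff} and matches the function $g(x)$ the paper specifies just before that lemma.
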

\begin{proof}
We replace the construction of the matrix inverse with Lemma \ref{lemma:inverse_ff}, and the rest of the proof is the same as Theorem \ref{theorem:est_fid}.
\end{proof}

\end{document}